%
\documentclass[runningheads]{llncs}
\usepackage[T1]{fontenc}
%
\usepackage{graphicx}
\graphicspath{{images/}}
\usepackage{amssymb}
\usepackage{physics}
%
%
\begin{document}
\title{Efficient Algorithms for Quantum Hashing}
%
%
\author{Ilnar Zinnatullin\inst{1,2}\orcidID{0009-0005-6367-0049} \and
Kamil Khadiev\inst{1,2}\orcidID{0000-0002-5151-9908}}
\authorrunning{I. Zinnatullin and K.Khadiev}
%
\institute{Institute of Computational Mathematics and Information Technologies, Kazan Federal University, Kazan, Tatarstan, Russia \and
Zavoisky Physical-Technical Institute,
FRC Kazan Scientific Center of RAS, Kazan, Tatarstan, Russia\\
\email{IlnGZinnatullin@kpfu.ru}}
\maketitle              
\begin{abstract}
Quantum hashing is a useful technique that allows us to construct memory-efficient algorithms and secure quantum protocols. First, we present a circuit that implements the phase form of quantum hashing using \(2^{n-1}\) \(CNOT\) gates, where \(n\) is the number of control qubits. Our method outperforms existing approaches and reduces the circuit depth. Second, we propose an algorithm that provides a trade-off between the number of \(CNOT\) gates (and consequently, the circuit depth) and the precision of rotation angles. This is particularly important in the context of NISQ (Noisy Intermediate-Scale Quantum) devices, where hardware-imposed angle precision limit remains a critical constraint.

\keywords{Quantum hashing  \and Quantum circuit decomposition \and Uniformly controlled rotation.}
\end{abstract}
\section{Introduction}

Nowadays, quantum hardware still belongs to the so-called Noisy Intermediate-Scale Quantum (NISQ) \cite{preskill2018} era, a term introduced by Preskill in 2018. This term reflects the fact that quantum computations are noisy and that the number of available qubits is limited. One of the major challenges is decoherence which disrupts the fragile quantum states necessary for computation and constrain the capabilities of the current quantum computers. Thus, the algorithm's execution time must be short enough before quantum states are broken.

Quantum computers provide a set of elementary gates. Typically, such a set includes one-qubit and two-qubit gates, and it is universal in the sense that any unitary transformation can be decomposed into a circuit consisting only of elementary gates \cite{barenco1995,divincenzo1995}. In most cases, the \(CNOT\) gate is used as the two-qubit gate. It is known that quantum gates are prone to errors. In particular, two-qubit gates have higher error rates compared to their one-qubit counterparts \cite{urlibm2024}. So, minimizing the total number of \(CNOT\) gates is essential for improving reliability. Hardware-imposed angle precision limit \cite{koczor2024} is another problem that we need to take into consideration. Because of limited angle precision, some small-angle rotations cannot be performed accurately on the current hardware. As a result, we face numerous challenges when implementing quantum algorithms on NISQ devices. Therefore, it is crucial to design quantum algorithms optimized according to various metrics, such as the circuit depth, the number of \(CNOT\) gates and the rotation angle precision.

Hashing is a well‑established technique used in a variety of computational and cryptographic scenarios. In classical computing, a hash function maps an input of arbitrary length to a fixed‑length output (a hash), enabling fast search, data integrity check, etc. In cryptography, hash functions must be one‑way and collision‑resistant, making them essential for digital signatures, message authentication codes (MACs), secure password storage, and blockchain security. A quantum hash function \cite{akvz2025} is a classical-quantum one-way function that maps classical inputs to quantum states in such a way that states corresponding to different inputs are `nearly' orthogonal. Because quantum hashes are exponentially smaller than the original inputs, quantum hash functions are non-invertible by virtue of fundamental quantum information theory. Moreover, collision resistance allows us to distinguish different quantum hashes with high probability.

Ambainis and Freivalds \cite{ambainis1998} proposed a technique called quantum fingerprinting to construct quantum automata recognizing the unary $MOD_{p}=\{a^i: i$ $mod$ $p=0\}$ language, where \(p\) is prime. Then, Buhrman et al. \cite{buhrman2001} used quantum fingerprinting that is based on binary error correcting codes to construct communication protocol for equality problem. Ablayev and Vasiliev \cite{ablayev2014} proposed a non-binary classical-quantum one-way function as a generalization of quantum fingerprinting and introduced the notion of quantum hashing. Later, Vasiliev proposed a phase form approach for quantum hashing in \cite{vasiliev2016}. Efficient construction of branching programs for quantum hashing in terms of both the circuit depth and width was investigated in \cite{ablayev2018}. In \cite{ablayev2018}, it is stated that when \(\varepsilon\)-biased sets are used to hash elements of \(\mathbb{Z}_{q}\), the algorithm has the running time of \(\Omega(\log q)\) and requires \(\Omega(\log\log q)\) qubits. 
The quantum hashing (fingerprinting) approach has been widely used in various areas such as 
stream processing algorithms \cite{l2009}, query model algorithms \cite{aaksv2022,asa2024}, online algorithms \cite{kk2019disj,kk2022}, branching programs \cite{kkk2022,aakv2018}, development of quantum devices \cite{v2016model}, automata \cite{ambainis1998,Ambainis2009}, etc.

In this paper, we focus on designing efficient quantum circuit (a program for quantum computer) for quantum hashing. By ``efficient'', we mean a circuit with the minimal number of \(CNOT\) gates. The base element of quantum circuit for the method is the uniformly controlled rotation gate, for which the most efficient representation was suggested by M\(\ddot{o}\)tt\(\ddot{o}\)nen et al. \cite{mottonen2004}. The technique allows us to represent the uniformly controlled rotation (and, as a result, a circuit for quantum hashing) with $n$ control qubits using $2^n$ \(CNOT\) gates.

We consider the phase form quantum hashing algorithm (the uniformly controlled rotation with a specific initial state), and present a circuit with \(2^{n-1}\) \(CNOT\) gates. Reducing the number of \(CNOT\) gates by half is important as two-qubit gates are more costly to implement on the current quantum devices. 

Another issue with representing uniformly controlled rotation by our technique and technique of M\(\ddot{o}\)tt\(\ddot{o}\)nen et al. \cite{mottonen2004} is rotation angle precision. These techniques require much higher precision for the modified angles compared to the original ones. We provide an algorithm that demonstrates a trade-off between the number of \(CNOT\) gates and the rotation angle precision. Similar trade-off was demonstrated by Khadieva et al. \cite{khadieva2024}. Asymptotically, our algorithm is equivalent to the existing one, yet it requires fewer \(CNOT\) gates in practice.

Optimization the circuit for quantum hashing algorithm is considered from different points of view. In \cite{zinnatullin2023}, the authors adapt quantum circuits implementing quantum hashing for specific quantum processor architectures and optimize them with respect to the number of \(CNOT\) gates. Researchers consider shallow circuits for approximate versions of quantum hashing in \cite{ziiatdinov2023gaps,zkk2025,v2023}.

The paper is organized as follows. Section~\ref{sec:preliminaries} provides the necessary preliminaries.  In Section~\ref{sec:result1}, we demonstrate how to construct an efficient quantum circuit that implements the phase form of quantum hashing. Section~\ref{sec:result2} is devoted to the algorithm that offers a trade-off between the number of \(CNOT\) gates and the precision of the rotation angles. Finally, we summarize our results in Section~\ref{sec:conclusions}.

\section{Preliminaries}\label{sec:preliminaries}

\textbf{Quantum computation.}
We use Dirac notation. In general case, a qubit is represented as a column vector (ket vector) \(\ket{\psi} = \alpha\ket{0} + \alpha_{1}\ket{1} \in \mathcal{H}^{2}\) with complex amplitudes satisfying \(|\alpha_{0}|^{2} + |\alpha_{1}|^{2} = 1\). A state of \(n\) qubits is described by a ket vector from \((\mathcal{H}^2)^{\otimes n}\) and has the form \(\sum_{j=0}^{2^{n}-1}\alpha_{j}\ket{j}\), where \(\alpha_{j} \in \mathbb{C}\) and \(\sum_{j=0}^{2^{n}-1}|\alpha_{j}|^{2} = 1\).

Quantum circuit model is a model for quantum computation, similar to classical circuits, which consists of qubits (represented as horizontal lines), quantum gates corresponding to unitary transformations and measurements to extract classical information from the qubits. A quantum circuit is a quantum algorithm that is characterized by two parameters: depth and width. The circuit depth describes the time complexity of the algorithm, while the width corresponds to its space complexity (i.e. the number of qubits required). More details about quantum circuits can be found in \cite{nielsen2010}.

Next, we present matrix representations of the gates used in our work: negation gate \(X =
\begin{pmatrix}
	0 & 1\\
	1 & 0 
\end{pmatrix}\), Hadamard gate \(H =
\begin{pmatrix}
\frac{1}{\sqrt{2}} & \frac{1}{\sqrt{2}}\\
\frac{1}{\sqrt{2}} & -\frac{1}{\sqrt{2}} 
\end{pmatrix}\), rotation about the \(z\)-axis \(R_{z}(\theta) =
\begin{pmatrix}
	e^{-\frac{i\theta}{2}} & 0\\
	0 & e^{\frac{i\theta}{2}} 
\end{pmatrix}\), rotation about the \(y\)-axis \(R_{y}(\theta) =
\begin{pmatrix}
\cos(\frac{\theta}{2}) & -\sin(\frac{\theta}{2})\\
\sin(\frac{\theta}{2}) & \cos(\frac{\theta}{2}) 
\end{pmatrix}\), relative phase shift \(P(\theta) = 
\begin{pmatrix}
	1 & 0             \\
	0 & e^{\frac{i\theta}{2}} 
\end{pmatrix}
\), two-qubit gate \(CNOT = \begin{pmatrix}
	1 & 0 & 0 & 0\\
	0 & 1 & 0 & 0\\
	0 & 0 & 0 & 1\\
	0 & 0 & 1 & 0 
\end{pmatrix}.\)

By \(C^{n-1}(R)\), we denote an $n$-qubit controlled \(R\) gate with \(n-1\) control qubits that applies \(R\) to the qubit with index \(n\) if and only if all control qubits in the state \(\ket{1}\), where \(R = 
\begin{pmatrix}
	r_{11} & r_{12}\\
	r_{21} & r_{22} 
\end{pmatrix}.
\) 	The \(2^{n} \times 2^{n}\) matrices corresponding to the \(C^{n-1}(R_{z}(\theta))\) and \(C^{n-1}(R_{y}(\theta))\) gates are
\begin{equation}\label{eq:controlled_rotation_matrix}
\scalebox{1.0}{$
C^{n-1}(R_{z}(\theta)) =
\begin{pmatrix}
	1 &        &        &      \\
	  & \ddots &        &      \\
	  &        & 1  &      \\        
	&        &        & R_{z}(\theta)
\end{pmatrix},\quad
C^{n-1}(R_{y}(\theta)) =
\begin{pmatrix}
	1 &        &    &\\
	  & \ddots &    &\\
	  &        & 1  &\\        
	  &        &    & R_{y}(\theta)
\end{pmatrix}. 
$}
\end{equation}
 
By \(UCR^{n-1}_{a}\), we denote an \(n\)-qubit uniformly controlled rotation about the \(a\)-axis \cite{mottonen2004} that employs \(n-1\) control qubits and uses all possible control states to rotate the qubit with index \(n\).

\textbf{Quantum hashing.}
Let \(S = \{s_{0}, s_{1}, \ldots, s_{d-1}\} \subseteq \mathbb{Z}_{q}\) be an \(\varepsilon\)-biased set, i.e., set of parameters satisfying \(
\frac{1}{d}\left|\sum_{j=0}^{d-1} e^{i2\pi s_{j}x/q}\right| \leq \varepsilon\), for every \(x \in \mathbb{Z}_{q} \backslash \{0\}\).

For \(x \in \mathbb{Z}_{q}\), we define its quantum hash in amplitude and phase forms. An \((n-1)\)-qubit quantum hash in the phase form \cite{vasiliev2016} is given by

\begin{equation}\label{eq:qh_phase_form}
\scalebox{1.0}{$
\ket{\psi(x)} = \frac{1}{\sqrt{d}}\sum_{j=0}^{d-1} e^{i2\pi s_{j}x/q}\ket{j}.
$}
\end{equation}

An \(n\)-qubit quantum hash in the amplitude form \cite{ablayev2014} is defined as

\begin{equation}\label{eq:qh_amplitude_form}
\scalebox{1.0}{$
\ket{\psi(x)} = \frac{1}{\sqrt{d}}\sum_{j=0}^{d-1}\ket{j}\left(\cos\left(\frac{2\pi s_{j}x}{q}\right)\ket{0}+\sin\left(\frac{2\pi s_{j}x}{q}\right)\ket{1}\right).
$}
\end{equation}

Note that \(n - 1 = \log d\) and \(d = O\left(\frac{\log q}{\varepsilon^{2}}\right)\) \cite{vasiliev2016,ablayev2014}. We give the following formula for quantum hash that is a generalization of Equations~(\ref{eq:qh_phase_form}) and (\ref{eq:qh_amplitude_form}):

\begin{equation}\label{eq:qh_generalization}
\scalebox{1.0}{$
\ket{\psi(x)} = \frac{1}{\sqrt{d}}\sum_{j=0}^{d-1}\ket{j}\left(R_{a}\left(\theta_{j}\right)\ket{q_{n}}\right),
$}
\end{equation}
where \(R_{a}\) is a rotation about the \(a\)-axis on the Bloch sphere, \(\theta_{j} = \frac{4\pi s_{j}x}{q}\). For the phase form, we use \(R_{z}\) gates and the target qubit \(\ket{q_{n}} = \ket{1}\). For the amplitude form,  we use \(R_{y}\) gates and the target qubit \(\ket{q_{n}} = \ket{0}\). Note that in Equation~(\ref{eq:qh_generalization}) an ancilla (namely, the target qubit \(\ket{q_{n}}\)) is used to create a quantum hash while Equation~(\ref{eq:qh_phase_form}) lacks it.
An algorithm for quantum hashing with parameter \( \tilde{\theta} = (\theta_{0}, \theta_{1}, \ldots, \theta_{d-1})\) in the quantum circuit model is presented in Figure~\ref{fig:qh_algorithm}. For further details on quantum hashing, please refer to \cite{bookablayev2015,bookablayev2023}.

\begin{figure}[h!]
	\centering
	\includegraphics[scale=0.3]{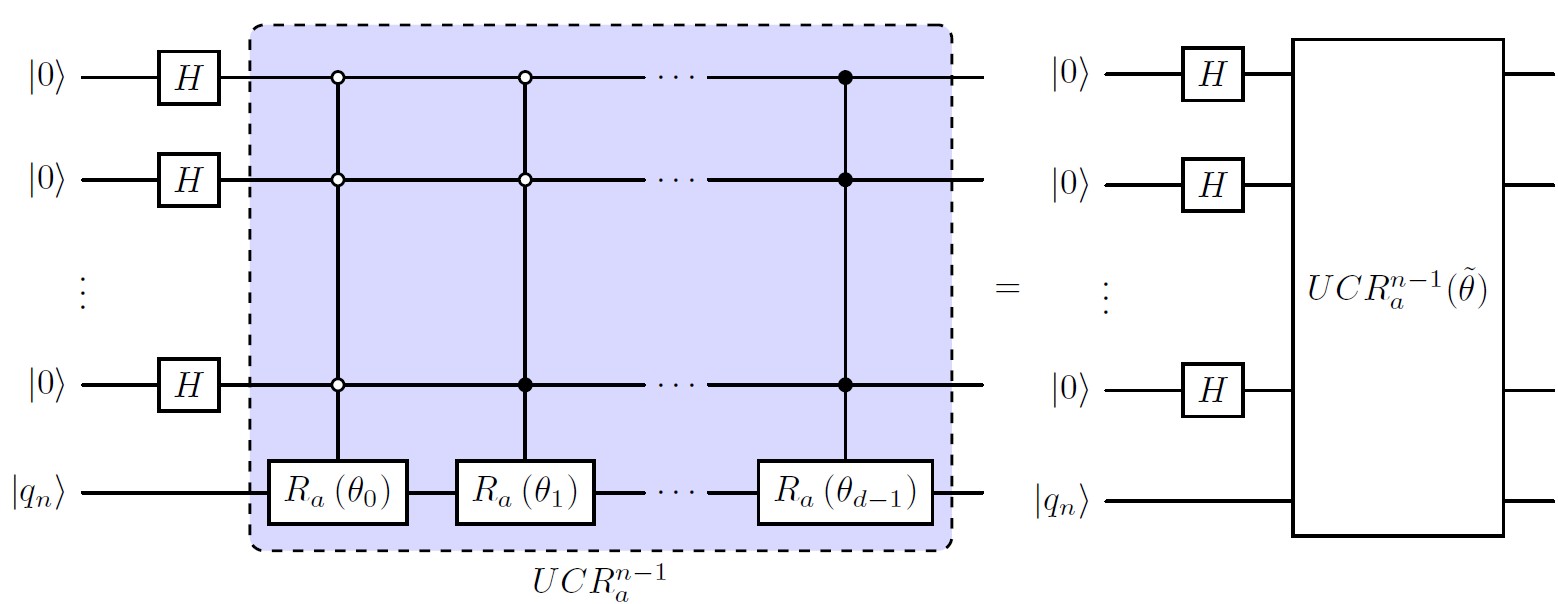}
	\caption{Algorithm for quantum hashing.}
	\label{fig:qh_algorithm}
\end{figure}

\subsection{Efficient decomposition of \(UCR\) gates}\label{sec:efficient_decomposition_of_ucr}

M\(\ddot{o}\)tt\(\ddot{o}\)nen et al. presented a method \cite{mottonen2004} for decomposing a \(UCR\) gate into a circuit consisting of an alternating sequence of single-qubit rotations and \(CNOT\) gates. This method requires a modification of the original rotation angles.
The decomposition is constructed recursively by applying the decomposition step from \cite{mottonen2006} (see Figure~\ref{fig:ucr_decomposition_step}) to each \(UCR\) gate until we get a circuit consisting only of one-qubit rotations and \(CNOT\) gates. The parameters of the \(UCR\) gates shown in Figure~\ref{fig:ucr_decomposition_step} are defined as follows: \(\tilde{\theta}^{0}_{1} = (\theta^{0}_{0}=\theta_{0}, \theta^{0}_{1}=\theta_{1}, \ldots, \theta^{0}_{2r-1}=\theta_{2r-1})\), \( \tilde{\theta}^{1}_{1} = (\theta^{1}_{0}, \theta^{1}_{1}, \ldots, \theta^{1}_{r-1}), \tilde{\theta}^{1}_{2} = (\theta^{2}_{0}, \theta^{2}_{1}, \ldots, \theta^{2}_{r-1})\), where \(r = 2^{n-2}\).

\begin{figure}[h!]
	\centering
	\includegraphics[scale=0.25]{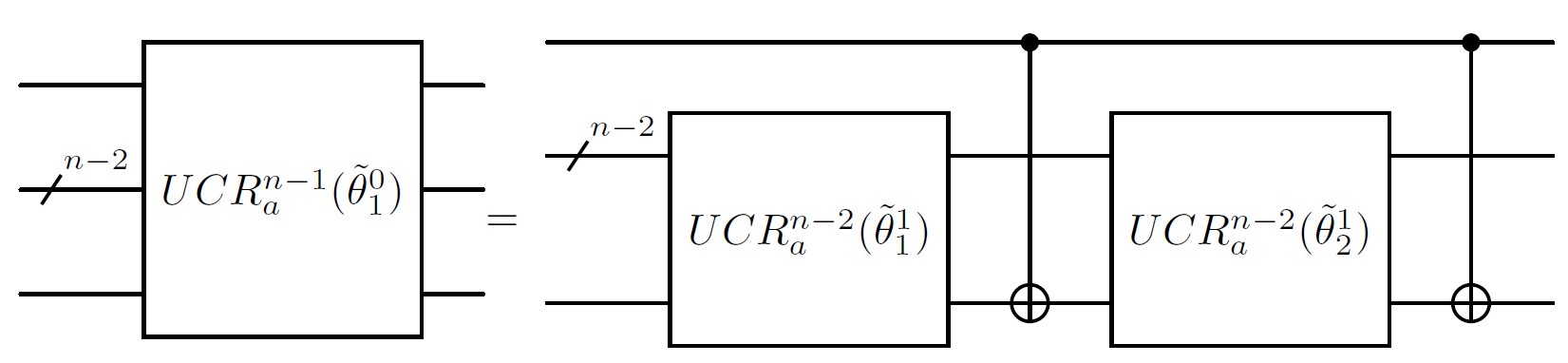}
	\caption{Decomposition step for M\(\ddot{o}\)tt\(\ddot{o}\)nen et al.'s method.}
	\label{fig:ucr_decomposition_step}
\end{figure}

The connection between the original and the modified angles for the decomposition step is the following: \(\theta^{0}_{0} = \theta^{1}_{0} + \theta^{2}_{0}, \ldots, \theta^{0}_{r-1} = \theta^{1}_{r-1} + \theta^{2}_{r-1}, \theta^{0}_{r} = \theta^{1}_{0} - \theta^{2}_{0}, \ldots, \theta^{0}_{2r-1} = \theta^{1}_{r-1} - \theta^{2}_{r-1}\).
So, we get
$
\theta^{1}_{i} = \frac{\theta^{0}_{i} + \theta^{0}_{r+i}}{2} = \frac{\theta_{i} + \theta_{r+i}}{2},\ \theta^{2}_{i} = \frac{\theta^{0}_{i} - \theta^{0}_{r+i}}{2} = \frac{\theta_{i} - \theta_{r+i}}{2},
$
where $i \in [0, r-1]$.

Finally, we obtain a circuit containing \(2^{n-1} = d\) one-qubit \(R_{a}\) gates and \(2^{n-1} = d\) \(CNOT\) gates. Let \(\tilde{\theta'} = (\theta'_{0}, \theta'_{1}, \ldots, \theta'_{d-1})\) denote the vector of modified angles in the resulting circuit. The relationship between the angles is given by
\begin{equation}\label{eq:connection_between_angles}
\scalebox{1.0}{$
\tilde{\theta'} = \frac{1}{d}M^{T}\tilde{\theta},
$}
\end{equation}
where \(M\) is a \(d \times d\) matrix with entries \(M_{ij} =(-1)^{(b_{i-1} \cdot g_{j-1})}, 1 \leq i,j \leq d\), and \((b_{i-1} \cdot g_{j-1})\) denotes the scalar product of the \((i-1)\)th codeword of the standard binary code and the \((j-1)\)th codeword of Gray code. Figure~\ref{fig:ucr_decomposition_example} shows an example for \(d = 8\).

\begin{figure}[h!]
	\centering
	\includegraphics[scale=0.36]{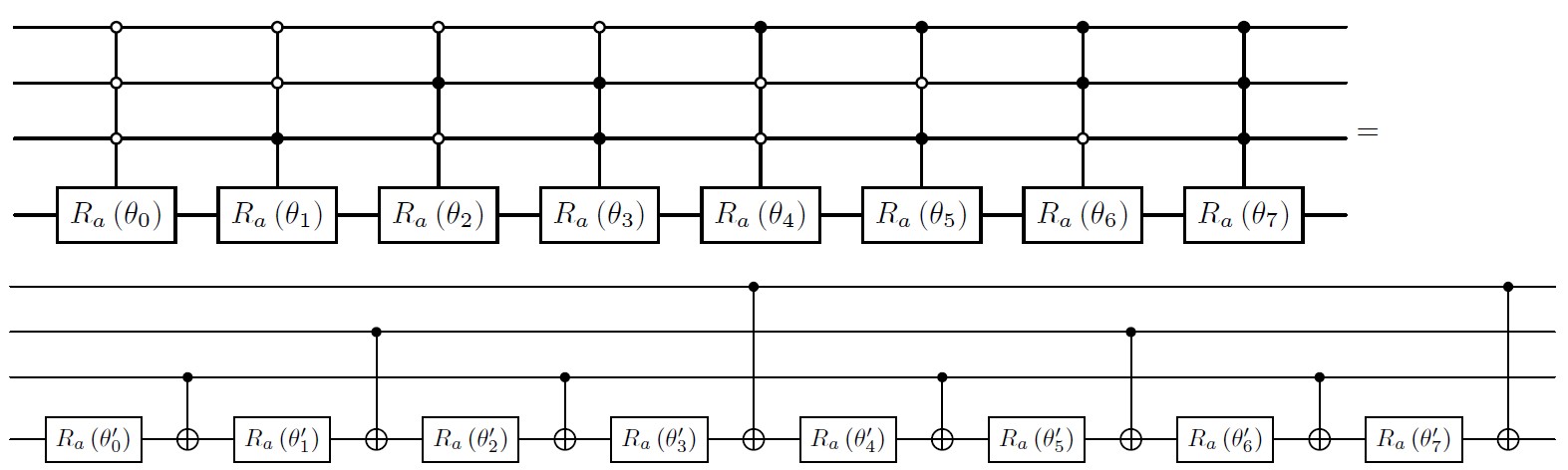}
	\caption{Decomposition for \(d = 8\).}
	\label{fig:ucr_decomposition_example}
\end{figure}

\section{Circuit optimization}\label{sec:result1}

In this section, we propose an algorithm that reduces the number of \(CNOT\) gates in the circuit that implements the phase form of quantum hashing by half compared to the best existing method.
%

Firstly, we present lemmas that show several equivalences of quantum circuits. These results will be used in the circuit optimization process.

\begin{lemma}\label{lemma:lemma_1}
	The circuit equivalence presented in Figure~\ref{fig:lemma1} holds.
	\begin{figure}[h!]
		\centering
		\includegraphics[scale=0.29]{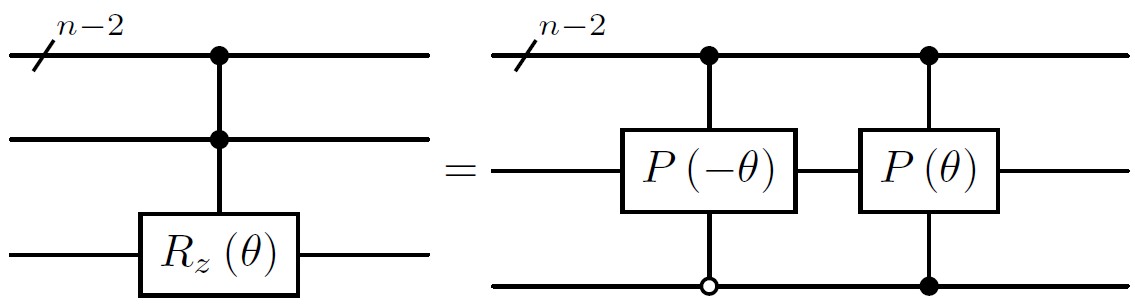}
		\caption{Circuit equivalence for the \(n\)-qubit controlled \(R_{z}(\theta)\) rotation.}
		\label{fig:lemma1}
	\end{figure}
\end{lemma}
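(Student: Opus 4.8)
The plan is to establish the equivalence in Figure~\ref{fig:lemma1} by showing that the two circuits realize the same unitary, verified directly on the computational basis. The structural fact that makes this clean is that $R_z(\theta)$ is diagonal, so the left-hand side $C^{n-1}(R_z(\theta))$ is a diagonal operator; every gate on the right-hand side is either diagonal ($R_z$, $P$) or a basis permutation ($CNOT$, $X$). Hence it suffices to track, for each basis vector $\ket{b_1\cdots b_n}$, the relabelling it undergoes together with the net phase it accrues, and to confirm agreement on all $2^n$ vectors. The elementary actions I would record first are $R_z(\theta)\ket{b}=e^{i(2b-1)\theta/2}\ket{b}$ and the fact that $C^{n-1}(R_z(\theta))$ multiplies $\ket{b_1\cdots b_n}$ by $e^{i(2b_n-1)\theta/2}$ precisely when $b_1=\cdots=b_{n-1}=1$, acting as the identity otherwise.

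The conceptual core is the splitting $R_z(\theta)=e^{-i\theta/2}\,\mathrm{diag}(1,e^{i\theta})$, which isolates a relative phase on $\ket{1}$ from a scalar $e^{-i\theta/2}$. This scalar is a harmless \emph{global} phase for the bare gate but becomes a genuine \emph{relative} phase once the gate is controlled: $C^{n-1}(R_z(\theta))$ stamps $e^{-i\theta/2}$ on the target-$\ket{0}$ branch and $e^{+i\theta/2}$ on the target-$\ket{1}$ branch of the all-ones control subspace. I would match these two contributions to the corresponding gates on the right-hand side, keeping careful track of the paper's convention $P(\theta)=\mathrm{diag}(1,e^{i\theta/2})$, so that producing a relative phase $e^{i\theta}$ calls for $P(2\theta)$; this factor-of-two bookkeeping is the easiest place to slip. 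If the right-hand side is presented recursively (in terms of gates with fewer controls), the same computation can instead be organised as an induction on the number of control qubits, with the single-control case $C^{1}(R_z(\theta))=R_z(\theta/2)\cdot CNOT\cdot R_z(-\theta/2)\cdot CNOT$ (the rotations acting on the target, with the control qubit driving the $CNOT$s) serving as the base case.

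The routine but error-prone part, which I expect to be the main obstacle, is the phase-and-permutation bookkeeping on the right-hand side: each $CNOT$ relabels the state it touches, so the single-qubit rotations sitting between consecutive $CNOT$s act on shifted labels, and one must check both that the intended conditional phase lands exactly on the all-ones control subspace and that the phases on every other subspace cancel. Once the net phase of each $\ket{b_1\cdots b_n}$ is shown to equal $e^{i(2b_n-1)\theta/2}$ on the all-ones control subspace and $1$ elsewhere, the two unitaries coincide and the claimed equivalence follows.
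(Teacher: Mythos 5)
Your proposal is correct and matches the paper's own argument: the paper likewise exploits that $C^{n-1}(R_z(\theta))$ is diagonal, notes that only the two all-ones-control basis states acquire phases $e^{-i\theta/2}$ (target $\ket{0}$) and $e^{+i\theta/2}$ (target $\ket{1}$), and factors this diagonal block as $\mathrm{diag}(1,1,e^{-i\theta/2},e^{i\theta/2})=\mathrm{diag}(1,1,1,e^{i\theta/2})\cdot\mathrm{diag}(1,1,e^{-i\theta/2},1)$, identifying each factor with an $n$-qubit controlled relative phase shift $P(\pm\theta)$ on qubit $n-1$ conditioned on qubit $n$. Your contingency plans (CNOT relabelling bookkeeping, induction on the number of controls) turn out to be unnecessary here, since the right-hand side contains no CNOTs, but the core phase-matching argument you describe is exactly the paper's proof.
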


\begin{proof}
	Using Equation~\ref{eq:controlled_rotation_matrix}, we note that only two basis states undergo a nontrivial effect:
	\[
	\begin{split}
		&\ket{\underline{111}\ldots\underline{11}0} \mapsto e^{-i\theta/2}\ket{111\ldots110},\\
		&\ket{\underline{111}\ldots\underline{11}1} \mapsto e^{i\theta/2}\ket{111\ldots111}, 
	\end{split}
	\]
	with the control qubits underlined.

One can easily verify that
\[
\begin{pmatrix}
	1 &   &   &             \\
	  & 1 &   &             \\
	  &   & e^{-i\theta/2} &\\        
	  &   &                & e^{i\theta/2} 
\end{pmatrix}=
\begin{pmatrix}
	1 &   &   &             \\
	& 1 &   &             \\
	&   & 1 &\\        
	&   &                & e^{i\theta/2} 
\end{pmatrix}
\begin{pmatrix}
	1 &   &   &             \\
	& 1 &   &             \\
	&   & e^{-i\theta/2} &\\        
	&   &                & 1 
\end{pmatrix}.
\]
	
	So, we propose an alternative interpretation of \(C^{n-1}(R_{z}(\theta))\) operator, namely, consecutive applications of the following two operators:
	\[
	\begin{split}
		&\ket{\underline{111}\ldots\underline{1}0\underline{0}} \mapsto \ket{111\ldots100},\\
		&\ket{\underline{111}\ldots\underline{1}1\underline{0}} \mapsto e^{-i\theta/2}\ket{111\ldots110},
	\end{split}
	\]
	and
	\[
	\begin{split}
		&\ket{\underline{111}\ldots\underline{1}0\underline{1}} \mapsto \ket{111\ldots101},\\
		&\ket{\underline{111}\ldots\underline{1}1\underline{1}} \mapsto e^{i\theta/2}\ket{111\ldots111}.
	\end{split}
	\]As before, all other basis states remain unchanged, and the control qubits are underlined. In this interpretation, two \(n\)-qubit controlled relative phase shifts are applied to the qubit with index \(n-1\). The circuit shown in Figure~(\ref{fig:multi_qubit_relative_phase_shifts}) illustrates this.
	\begin{figure}[h!]
		\centering
		\includegraphics[scale=0.25]{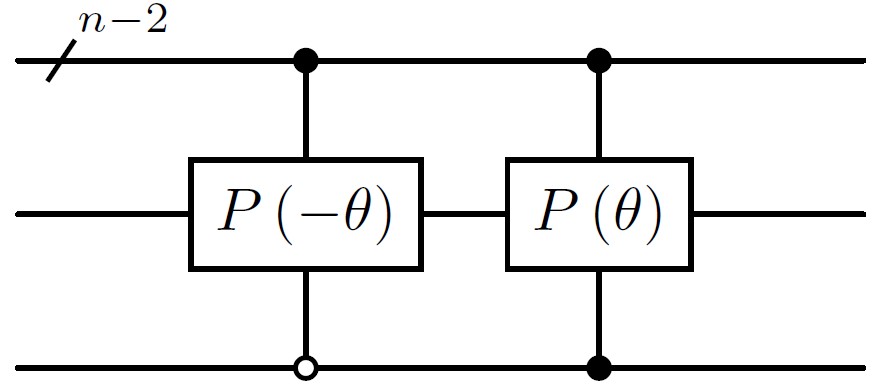}
		\caption{Consecutive \(n\)-qubit controlled relative phase shifts.}
		\label{fig:multi_qubit_relative_phase_shifts}
	\end{figure}
\end{proof}

\begin{lemma}\label{lemma:lemma_2}
	Relative phase shift $P$ is equivalent, up to a global phase factor, to the half-argument rotation $R_{z}$, that is, \(P(\theta) = e^{i\theta/4}R_{z}(\theta/2)\).
\end{lemma}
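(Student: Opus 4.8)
The plan is to verify the identity by a direct matrix computation, substituting the explicit diagonal forms of $P$ and $R_z$ given in the preliminaries. First I would write down $R_z(\theta/2)$ by replacing $\theta$ with $\theta/2$ in the definition of $R_z$, obtaining the diagonal matrix with entries $e^{-i\theta/4}$ and $e^{i\theta/4}$. Then I would multiply this matrix by the scalar $e^{i\theta/4}$ and read off the resulting diagonal: the top-left entry becomes $e^{i\theta/4}e^{-i\theta/4}=1$ and the bottom-right entry becomes $e^{i\theta/4}e^{i\theta/4}=e^{i\theta/2}$. Comparing with the definition of $P(\theta)=\mathrm{diag}(1,e^{i\theta/2})$ shows that the two matrices coincide exactly, which establishes the claimed equality.

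Since the equality holds on the nose, the phrase ``up to a global phase factor'' refers only to the interpretation: the factor $e^{i\theta/4}$ is a scalar multiple of the identity, so $P(\theta)$ and $R_z(\theta/2)$ implement the same physical operation. I would close by remarking that a global phase is unobservable, and therefore $P$ and the half-argument $R_z$ are interchangeable in any circuit. This is precisely the fact that will let me convert the controlled relative phase shifts produced in Lemma~\ref{lemma:lemma_1} back into controlled $R_z$ rotations suitable for the efficient decomposition.

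The only points requiring care --- and these are the closest thing to an obstacle in an otherwise routine verification --- are the sign and factor-of-two conventions baked into the matrix definitions: $R_z$ carries $e^{\mp i\theta/2}$ on its diagonal while $P$ uses $e^{i\theta/2}$, and the half-argument halves the exponents again to $\pm i\theta/4$. Keeping these conventions aligned is exactly what makes the global phase come out to $e^{i\theta/4}$ rather than some other value, so I would double-check the exponents at each substitution step.
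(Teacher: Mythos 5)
Your proposal is correct and follows essentially the same route as the paper: a direct diagonal-matrix computation in which the scalar $e^{i\theta/4}$ is factored so that $\mathrm{diag}(e^{-i\theta/4},e^{i\theta/4})=R_z(\theta/2)$ matches $P(\theta)=\mathrm{diag}(1,e^{i\theta/2})$. The only (immaterial) difference is direction --- the paper factors the phase out of $P(\theta)$, while you multiply it into $R_z(\theta/2)$ --- and your attention to the paper's specific convention $P(\theta)=\mathrm{diag}(1,e^{i\theta/2})$ is exactly the right point of care.
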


\begin{proof}
	Obviously,
	$
	P(\theta) = 
	\begin{pmatrix}
		1 & 0             \\
		0 & e^{i\theta/2} 
	\end{pmatrix}=
	e^{i\theta/4}
	\begin{pmatrix}
		e^{-i\theta/4} & 0             \\
		0              & e^{i\theta/4} 
	\end{pmatrix}=
	e^{i\theta/4}R_{z}(\theta/2).
	$
\end{proof}

\begin{lemma}\label{lemma:lemma_3}
	The circuit equivalence shown in Figure~\ref{fig:lemma3} holds. In mathematical form, this equivalence is expressed as
	$
	(I^{\otimes (n-1)}\otimes X)C^{n-1}(R_{z}(\theta)) = C^{n-1}(R_{z}(-\theta))(I^{\otimes (n-1)}\otimes X).
	$
	\begin{figure}[h!]
		\centering
		\includegraphics[scale=0.2]{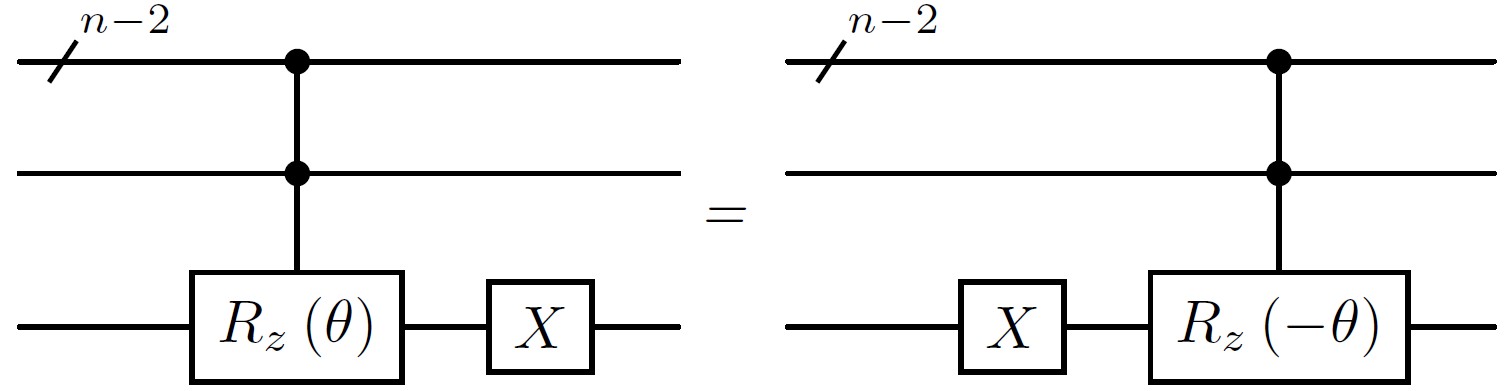}
		\caption{Circuit equivalence for \(n\)-qubit controlled \(R_{z}(\theta)\) rotation and single \(X\) gate.}
		\label{fig:lemma3}
	\end{figure}
\end{lemma}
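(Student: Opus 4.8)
The plan is to reduce the claimed $2^{n}\times 2^{n}$ matrix identity to a single $2\times 2$ relation by exploiting the common block structure of both operators. First I would order the computational basis so that the first $n-1$ qubits index $2^{n-1}$ two-dimensional blocks and the target qubit indexes the entries within each block. With this ordering, $I^{\otimes(n-1)}\otimes X$ is block diagonal with a copy of $X$ in every block, while by Equation~(\ref{eq:controlled_rotation_matrix}) the gate $C^{n-1}(R_{z}(\theta))$ is block diagonal with the identity in every block except the last one, indexed by the all-ones control string $11\ldots1$, where it equals $R_{z}(\theta)$. Since both products on either side of the equation respect this shared decomposition, it suffices to verify the identity block by block.

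In every block other than the last, both $C^{n-1}(R_{z}(\theta))$ and $C^{n-1}(R_{z}(-\theta))$ act as the identity, so each side of the claimed equation collapses to $X$ and the two trivially agree. The entire content of the lemma therefore lives in the final block, where I must establish the $2\times 2$ identity
\[
X\,R_{z}(\theta) = R_{z}(-\theta)\,X .
\]
I would prove this by direct multiplication, using $X=\left(\begin{smallmatrix}0&1\\1&0\end{smallmatrix}\right)$ and $R_{z}(\theta)=\mathrm{diag}(e^{-i\theta/2},e^{i\theta/2})$: both products equal $\left(\begin{smallmatrix}0&e^{i\theta/2}\\ e^{-i\theta/2}&0\end{smallmatrix}\right)$.

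As a more transparent alternative in the circuit picture, I would instead track the two basis states on which the last block acts nontrivially. On $\ket{1\ldots10}$ the left-hand side first applies $R_{z}(\theta)$ (contributing the factor $e^{-i\theta/2}$) and then flips the target, whereas the right-hand side flips the target first and then applies $R_{z}(-\theta)$ to the resulting $\ket{1}$ (again contributing $e^{-i\theta/2}$); the two outcomes coincide, and the symmetric computation on $\ket{1\ldots11}$ yields the factor $e^{i\theta/2}$ on both sides.

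The single genuine step is the relation $X R_{z}(\theta)=R_{z}(-\theta)X$, which expresses that conjugating a $z$-rotation by the bit-flip $X$ reverses the sign of the rotation angle, since $X$ inverts the $z$-axis on the Bloch sphere. I do not anticipate a real obstacle; the only point requiring care is the sign bookkeeping, namely confirming that commuting $X$ past the rotation sends $\theta\mapsto-\theta$ rather than leaving the angle unchanged.
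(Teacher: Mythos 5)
Your proposal is correct and follows essentially the same route as the paper: both exploit the block-diagonal structure of $I^{\otimes(n-1)}\otimes X$ and $C^{n-1}(R_{z}(\pm\theta))$, reduce the claim to the last $2\times 2$ block, and verify by direct multiplication that $X R_{z}(\theta)$ and $R_{z}(-\theta)X$ both equal the anti-diagonal matrix with entries $e^{i\theta/2}$ and $e^{-i\theta/2}$ (the paper's $\tilde{R}_{z}(\theta)$). Your explicit block-by-block reduction and the basis-state check are just cleaner phrasings of the paper's full-matrix computation.
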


\begin{proof}
	Let us consider the matrix representations corresponding to the left and the right circuits.
	
	
	The matrix of the $n$-qubit operator that applies $X$ to the qubit with index $n$ is given by
	\[
	\underbrace{I \otimes I \otimes \ldots \otimes I \otimes I}_{n-1} \otimes X =
	\underbrace{I \otimes I \otimes \ldots \otimes I}_{n-2} \otimes
	\begin{pmatrix}
		X &   \\
		& X 
	\end{pmatrix} = 
	\begin{pmatrix}
		X &        &  \\
		& \ddots &  \\
		&        & X 
	\end{pmatrix}_{2^{n} \times 2^{n}}.
	\]
	
	The matrix associated with the left circuit is
	\[
	\scalebox{0.9}{$
\begin{pmatrix}
	X &        &   &   \\
	& \ddots &   &   \\
	&        & X &   \\
	&        &   & X 
\end{pmatrix}
\begin{pmatrix}
	I &        &   &              \\
	& \ddots &   &              \\
	&        & I &              \\        
	&        &   & R_{z}(\theta) 
\end{pmatrix}=
\begin{pmatrix}
	X &        &   &                      \\
	& \ddots &   &                      \\
	&        & X &                      \\
	&        &   & \tilde{R}_{z}(\theta) 
\end{pmatrix}.
	$}
	\]
	
	The matrix that represents the right circuit is
	\[
	\scalebox{0.9}{$
	\begin{pmatrix}
		I &        &   &                  \\
		& \ddots &   &               \\
		&        & I &               \\        
		&        &   & R_{z}(-\theta) 
	\end{pmatrix}
	\begin{pmatrix}
		X &        &   &   \\
		& \ddots &   &   \\
		&        & X &   \\
		&        &   & X 
	\end{pmatrix}=
	\begin{pmatrix}
		X &        &   &                      \\
		& \ddots &   &                      \\
		&        & X &                      \\
		&        &   & \tilde{R}_{z}(\theta) 
	\end{pmatrix}.
	$}
	\]
	
	We denote by $\tilde{R}_{z}(\theta)$ a matrix
	\[
	\begin{pmatrix}
		0              & e^{i\theta/2} \\
		e^{-i\theta/2} & 0             
	\end{pmatrix}.
	\]
	
	Direct comparison of the matrices reveals that they are identical.
\end{proof}


Our circuit optimization algorithm that reduces the number of \(CNOT\) gates is based on the next theorem.

\begin{theorem}
The phase version of the quantum hashing algorithm can be represented as a quantum circuit with $2^{n-1}$ CNOT gates.
\end{theorem}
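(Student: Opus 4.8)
The plan is to start from the realization of the phase form supplied by the generalized hash of Equation~(\ref{eq:qh_generalization}): prepare the target qubit (index \(n\)) in the state \(\ket{1}\) and apply the uniformly controlled rotation \(UCR_{z}^{n-1}\) to it, exactly as in the construction of Section~\ref{sec:efficient_decomposition_of_ucr}. The entire improvement rests on one structural fact, namely that this target qubit is \emph{fixed} at \(\ket{1}\) on input and output, so that after we reinterpret the circuit it never needs to carry information of its own. I would therefore proceed by rewriting the standard Möttönen decomposition of this \(UCR\) with the help of Lemmas~\ref{lemma:lemma_1}--\ref{lemma:lemma_3}, most naturally by induction on \(n\) using the recursive decomposition step of Figure~\ref{fig:ucr_decomposition_step}, so that at each level of the recursion the gates that couple the register to the fixed qubit are removed.

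The core of the argument is the following chain of rewrites. By Lemma~\ref{lemma:lemma_1}, each controlled rotation \(C^{n-1}(R_{z}(\theta))\) acting on qubit \(n\) is replaced by two \(n\)-qubit controlled relative phase shifts applied to qubit \(n-1\), in which qubit \(n\) is demoted from target to control. Because that qubit is permanently in \(\ket{1}\), the phase shift conditioned on it being \(\ket{0}\) never fires and is discarded, while the one conditioned on \(\ket{1}\) fires unconditionally and so loses that control entirely. Lemma~\ref{lemma:lemma_2} then re-expresses every surviving relative phase shift \(P(\theta)\) as a half-argument rotation \(R_{z}(\theta/2)\), the accompanying global phases \(e^{i\theta/4}\) being immaterial; and Lemma~\ref{lemma:lemma_3} is used to commute the \(X\) gates generated by the \(\ket{0}\)-controls through the controlled rotations, flipping the signs of the affected angles so that the \(CNOT\) gates left adjacent by the discarded branch cancel. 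The effect is that the operations previously routed through the fixed qubit are pushed onto the register, and the \(CNOT\) gates that performed this routing — precisely half of those produced by the plain Möttönen decomposition of the same hash — are eliminated.

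Collecting these savings through the recursion yields the count: if \(C(n)\) denotes the number of \(CNOT\) gates in the optimized circuit, the decomposition step contributes two sub-circuits on \(n-1\) qubits together with the connecting gates, one of which is removed by the reduction above, giving a recurrence that solves to \(C(n)=2^{n-1}\); the base case is verified directly and the whole construction is illustrated for \(d=8\) in Figure~\ref{fig:ucr_decomposition_example}. I expect the main obstacle to lie not in any single local rewrite but in the global bookkeeping: one must check that the eliminations carried out at the different leaves and the different levels of the recursion are mutually consistent, that the modified angles still obey the relation \(\tilde{\theta'}=\frac{1}{d}M^{T}\tilde{\theta}\) of Equation~(\ref{eq:connection_between_angles}) once the global phases are dropped, and, crucially, that realizing the surviving multi-controlled phase shifts directly on the register does not silently reintroduce \(CNOT\) gates, so that the total is genuinely \(2^{n-1}\) rather than merely halved in the asymptotic sense.
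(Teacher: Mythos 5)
Your chain of local rewrites is exactly the paper's: Lemma~\ref{lemma:lemma_1} demotes the fixed target qubit to a control, the branch conditioned on $\ket{0}$ is discarded and the always-satisfied control is dropped, Lemma~\ref{lemma:lemma_2} turns the surviving phase shifts into $R_{z}$ rotations, and Lemma~\ref{lemma:lemma_3} absorbs the $X$ gates so that the rotations merge in pairs. The paper finishes by observing that the survivors assemble into a \emph{genuine} $UCR^{n-2}_{z}$ gate on the first $n-1$ qubits with angles $\theta'_{i}=(\theta_{2i}-\theta_{2i+1})/2$ (Figures~\ref{fig:circuit_ucr_n-2_using_gray}--\ref{fig:ucr_identity}), and then applies M\(\ddot{o}\)tt\(\ddot{o}\)nen's decomposition \emph{once} to that smaller gate, which immediately yields $2^{n-2}$ \(CNOT\) gates, i.e.\ half of the $2^{n-1}$ needed for the original $UCR^{n-1}_{z}$. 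This observation is also what disposes of the worry in your last sentence: no \(CNOT\)s are silently reintroduced, because the surviving multi-controlled phase shifts are not arbitrary --- they constitute a single uniformly controlled rotation on one fewer qubit, and the entire \(CNOT\) cost is the known cost of that one gate.

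The genuine gap is in how you propose to organize and count this: ``induction on $n$ using the recursive decomposition step of Figure~\ref{fig:ucr_decomposition_step},'' removing the gates that couple to the fixed qubit ``at each level of the recursion.'' That cannot be executed. Already after one application of the decomposition step, qubit $n$ is the \emph{target} of the two connecting \(CNOT\)s, so between them it is no longer fixed in the state $\ket{1}$ (once the Hadamards put the register in superposition, qubit $n$ becomes entangled with the control qubit of those \(CNOT\)s), and the premise that lets you discard the $\ket{0}$-branch in Lemma~\ref{lemma:lemma_1} fails for the inner sub-$UCR$ gates. The elimination must therefore be performed \emph{before} any M\(\ddot{o}\)tt\(\ddot{o}\)nen step, on the expansion of the $UCR$ into $2^{n-1}$ multi-controlled rotations; the paper orders these by Gray code (Figure~\ref{fig:decomposition_of_ucr_using_gray}) precisely so that single $X$ gates mediate between control patterns and later cancel --- note it is these $X$ gates, not \(CNOT\)s, that cancel, since at that stage of the argument the circuit contains no \(CNOT\)s at all. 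Correspondingly, your recurrence is unsupported: ``two sub-circuits plus connecting gates, one removed'' gives $C(n)=2C(n-1)+1$ with $C(1)=0$, which solves to $2^{n-1}-1$, not $2^{n-1}$; and even for the unreduced gate the naive recurrence $C(n)=2C(n-1)+2$ gives $2^{n}-2$ rather than M\(\ddot{o}\)tt\(\ddot{o}\)nen's $2^{n-1}$, so the count cannot be read off the recursion without the extra cancellation bookkeeping. The clean route, as in the paper, is to reduce first and then quote the $2^{k}$-\(CNOT\) bound for a uniformly controlled rotation with $k$ controls applied to the reduced gate.
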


\begin{proof}
For proving the claim of the theorem, we should show the equivalence that presented in Figure~\ref{fig:final_result}. So, in that case, we can use technique from Section \ref{sec:efficient_decomposition_of_ucr} for representation of $UCR$ gate and obtain a circuit with $2^{n-1}$ \(CNOT\) gates.

\begin{figure}[h!]
	\centering
	\includegraphics[scale=0.46]{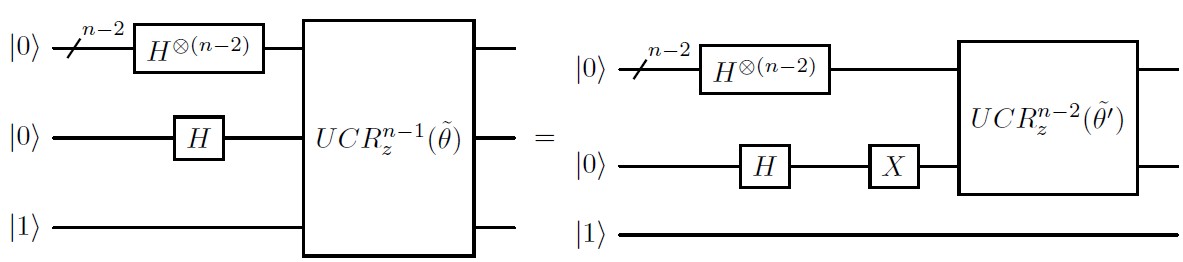}
	\caption{Elimination of the ancilla in the circuit implementing the phase form of quantum hashing.}
	\label{fig:final_result}
\end{figure}

We start with the original circuit shown in Figure~\ref{fig:circuit_for_the_phase_form_of_qh}. The objective is to eliminate the ancilla qubit (i.e. the qubit with index \(n\)), thereby reducing the number of control qubits in the \(UCR^{n-1}_{z}\) gate by one.  Subsequently, we apply the method described in Section~\ref{sec:efficient_decomposition_of_ucr} to the resulting \(UCR^{n-2}_{z}\) gate and get a circuit consisting of \(2^{n-2}\) \(CNOT\) gates.

\begin{figure}[h!]
	\centering
	\includegraphics[scale=0.18]{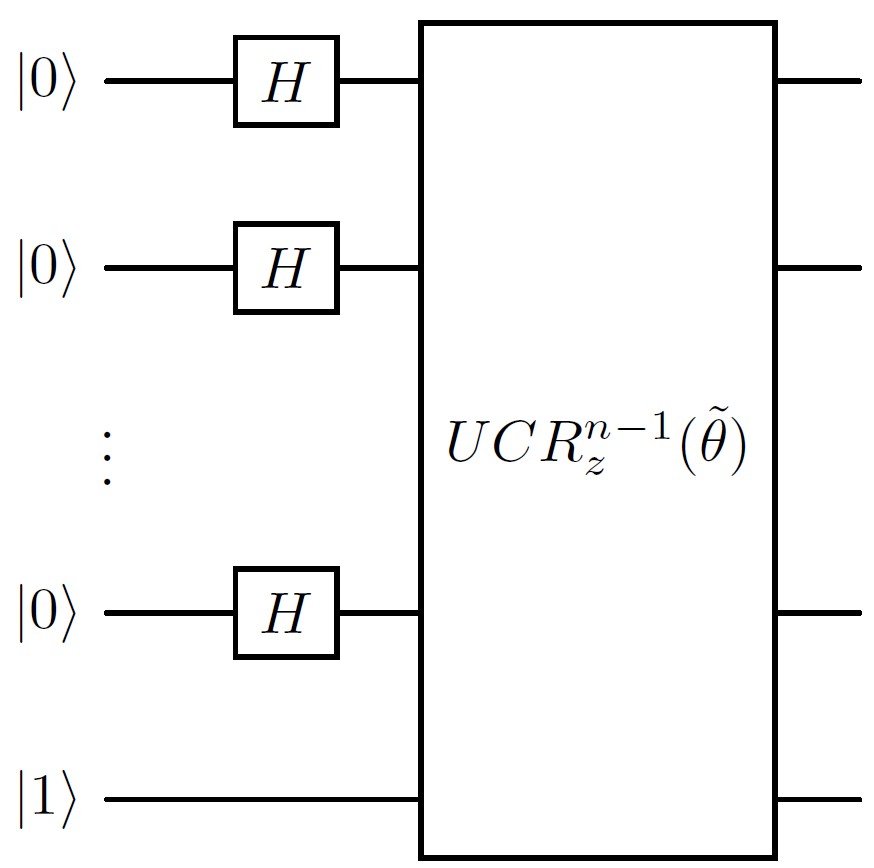}
	\caption{Algorithm for the phase form of quantum hashing.}
	\label{fig:circuit_for_the_phase_form_of_qh}
\end{figure}

Let us reorder \(n\)-qubit controlled rotations in the \(UCR_{z}^{n-1}\) gate using Gray code instead of the standard binary code (see Figure~\ref{fig:decomposition_of_ucr_using_gray}). This reordered structure is then applied to the circuit presented in Figure~\ref{fig:circuit_for_the_phase_form_of_qh} (see Figure~\ref{fig:intermediate_decomposition}). We use Gray code because adjacent codewords differ in exactly one position. This allows us to switch between states of control qubits by applying a single \(X\) gate. Note that the circuit shown in Figure~\ref{fig:intermediate_decomposition} consists solely of \(n\)-qubit controlled rotations of the form \(C^{n-1}(R_{z})\), which require all control qubits to be in the state \(\ket{1}\).

For simplicity, we fix \(n = 4\) (see Figure~\ref{fig:circuit_for_n=4}) and follow the optimization process step by step. Our focus is on optimizing the red border-boxed gate depicted in Figure~\ref{fig:circuit_for_n=4}. First, we apply Lemma~\ref{lemma:lemma_1} to each blue border-boxed gate in Figure~\ref{fig:circuit_for_n=4} and get a circuit shown in Figure~\ref{fig:after_application_of_lemma_1}. Since the last qubit is in the state \(\ket{1}\), multi-qubit controlled relative phase shifts with negative arguments are not applied. Consequently, the circuit is further simplified to the one shown in Figure~\ref{fig:simplified_circuit}. Next, we replace relative phase shifts with rotations about the \(z\)-axis using Lemma~\ref{lemma:lemma_2} (see Figure~\ref{fig:simplified_circuit_with_rotations}). Subsequently, we apply Lemma~\ref{lemma:lemma_3} to the blue border-boxed segments of the circuit shown in Figure~\ref{fig:simplified_circuit_with_rotations}. This step reduces the number of multi-qubit controlled rotations by half and eliminates some of the \(X\) gates, as each blue border-boxed pair of multi-qubit controlled rotations in Figure~\ref{fig:merging_multi_qubit_controlled_rotations} can be merged into one multi-qubit controlled rotation. Consecutive \(X\) gates (that are in red border-boxes) applied to the third qubit cancel out each other. So, we obtain the circuit shown in Figure~\ref{fig:circuit_ucr_n-2_using_gray}. 
Careful analysis reveals that Gray code is used to iterate all multi-qubit controlled rotations in Figure~\ref{fig:circuit_ucr_n-2_using_gray}. Alternatively, we can use the standard binary code for this purpose, resulting in the circuit presented in Figure~\ref{fig:circuit_ucr_n-2}. As a result, we obtain the circuit identity shown in Figure~\ref{fig:ucr_identity_small}.

\begin{figure}[h!]
	\centering
	\includegraphics[scale=0.4]{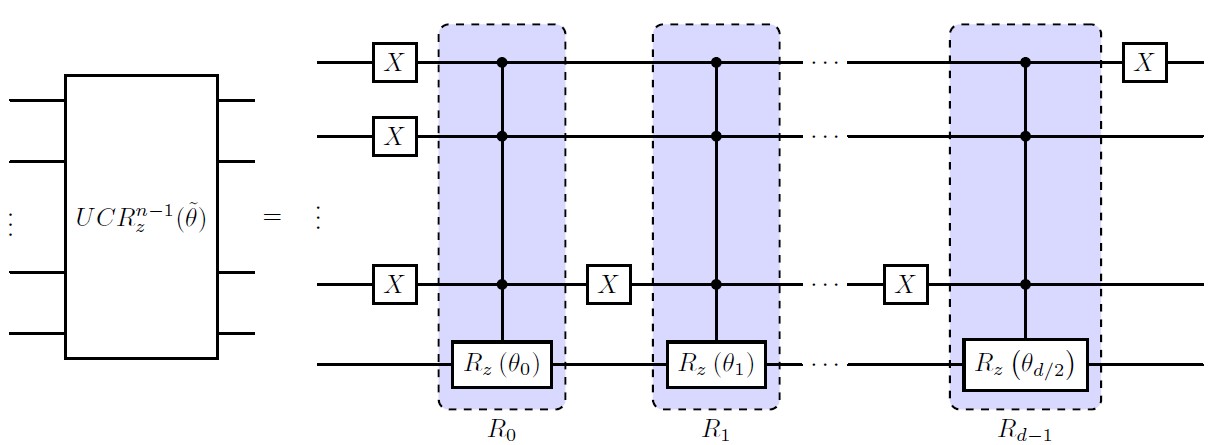}
	\caption{Decomposition of \(UCR^{n-1}_{z}\) gate using Gray code.}
	\label{fig:decomposition_of_ucr_using_gray}
\end{figure}

\begin{figure}[h!]
	\centering
	\includegraphics[scale=0.27]{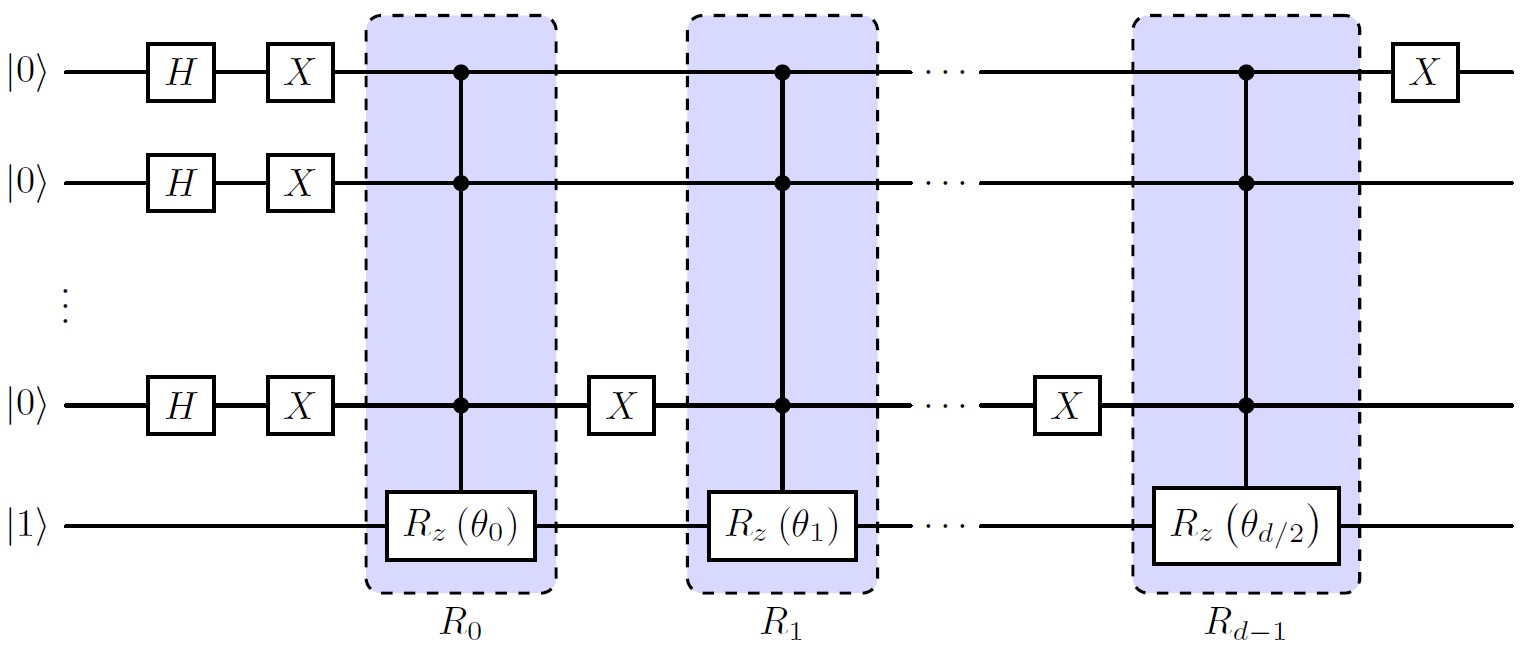}
	\caption{Intermediate decomposition for the circuit implementing the phase form of quantum hashing.}
	\label{fig:intermediate_decomposition}
\end{figure}

\begin{figure}[h!]
	\centering
	\includegraphics[scale=0.38]{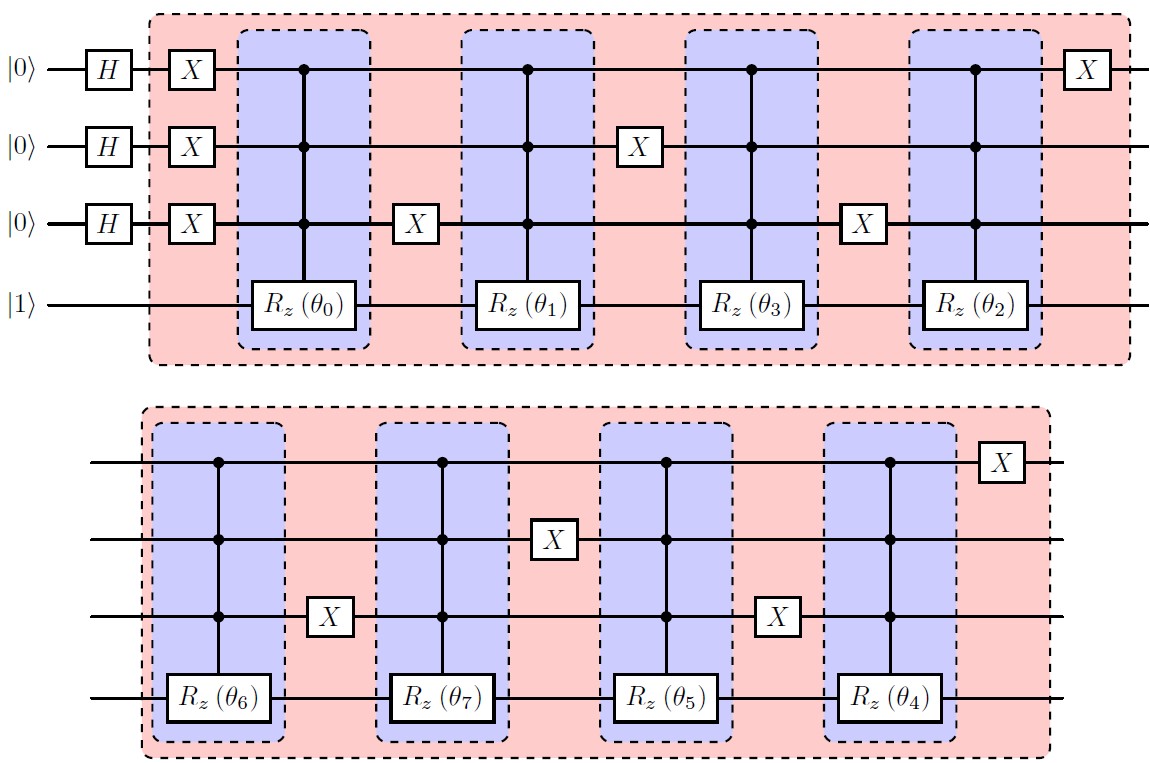}
	\caption{Initial circuit for \(n = 4\).}
	\label{fig:circuit_for_n=4}
\end{figure}

\begin{figure}[h!]
	\centering
	\includegraphics[scale=0.38]{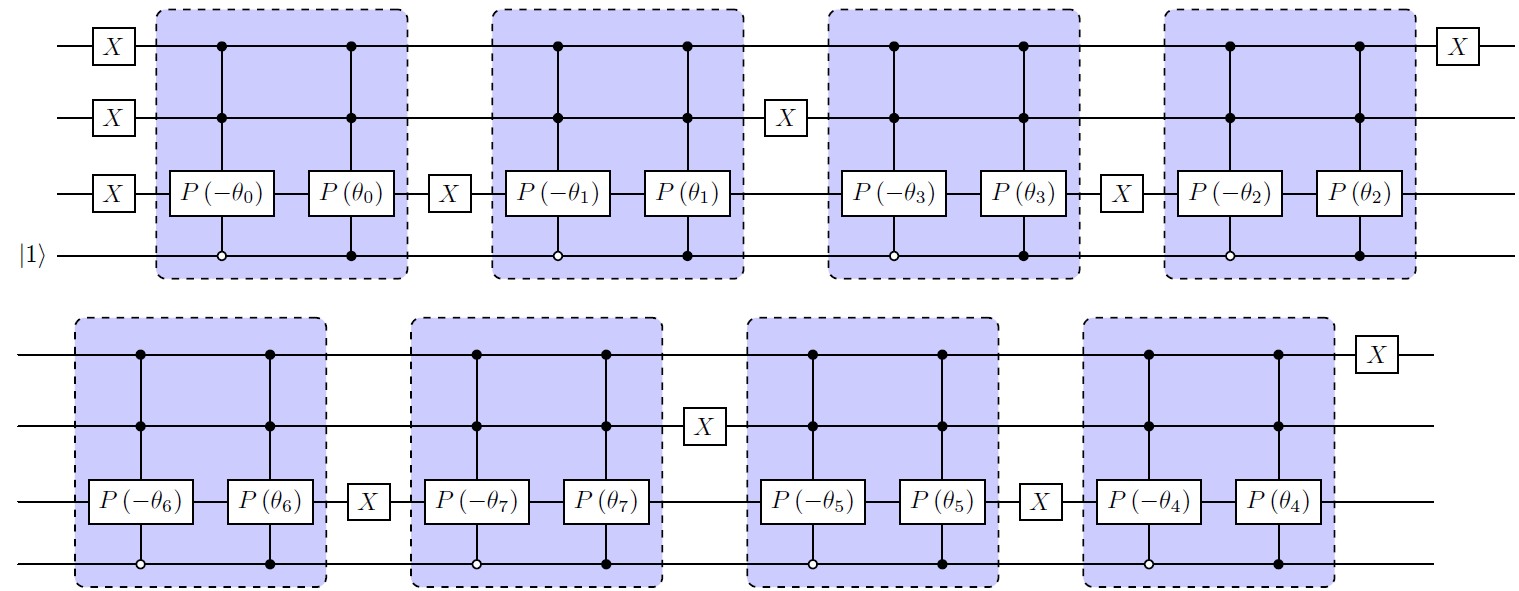}
	\caption{The circuit after applying Lemma~\ref{lemma:lemma_1}.}
	\label{fig:after_application_of_lemma_1}
\end{figure}

\begin{figure}[h!]
	\centering
	\includegraphics[scale=0.25]{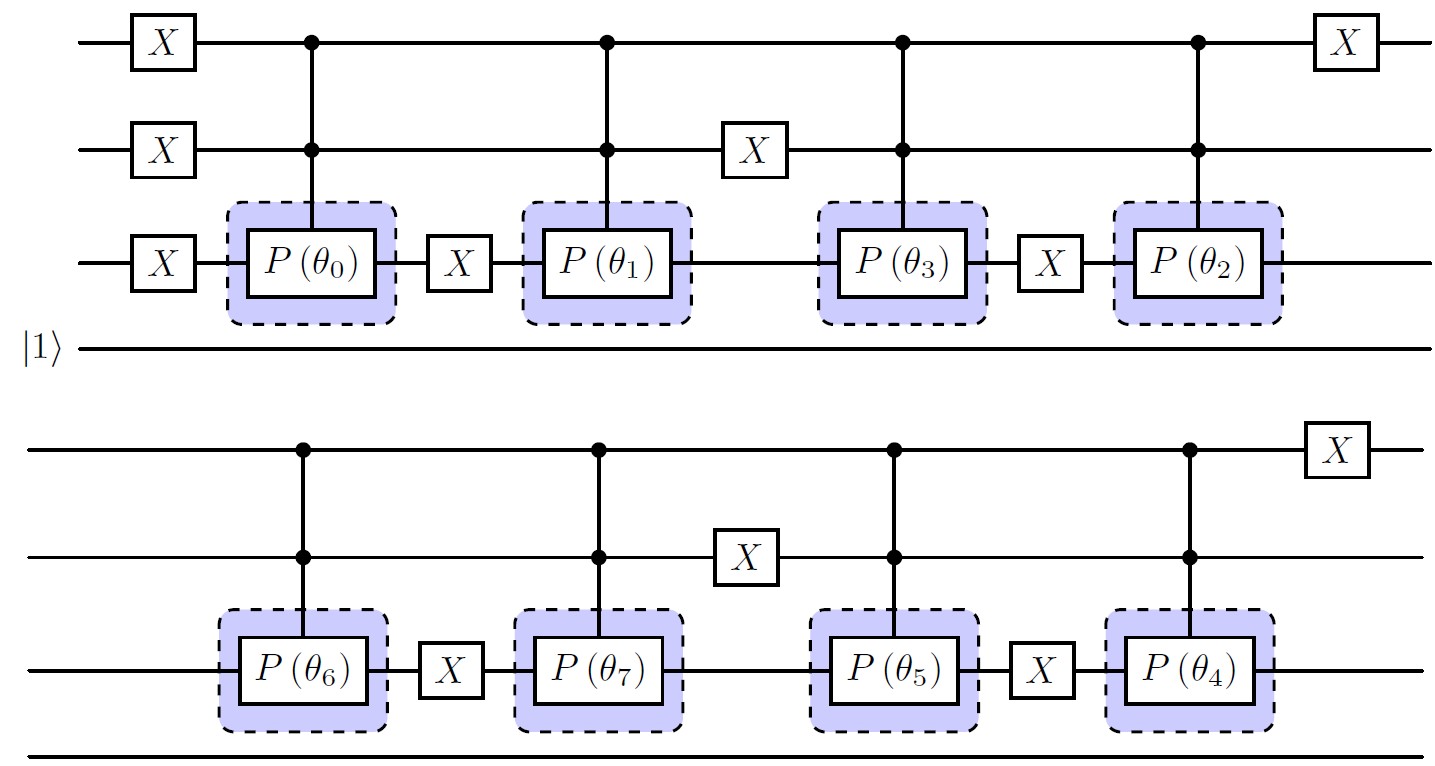}
	\caption{The simplified circuit.}
	\label{fig:simplified_circuit}
\end{figure}

\begin{figure}[h!]
	\centering
	\includegraphics[scale=0.25]{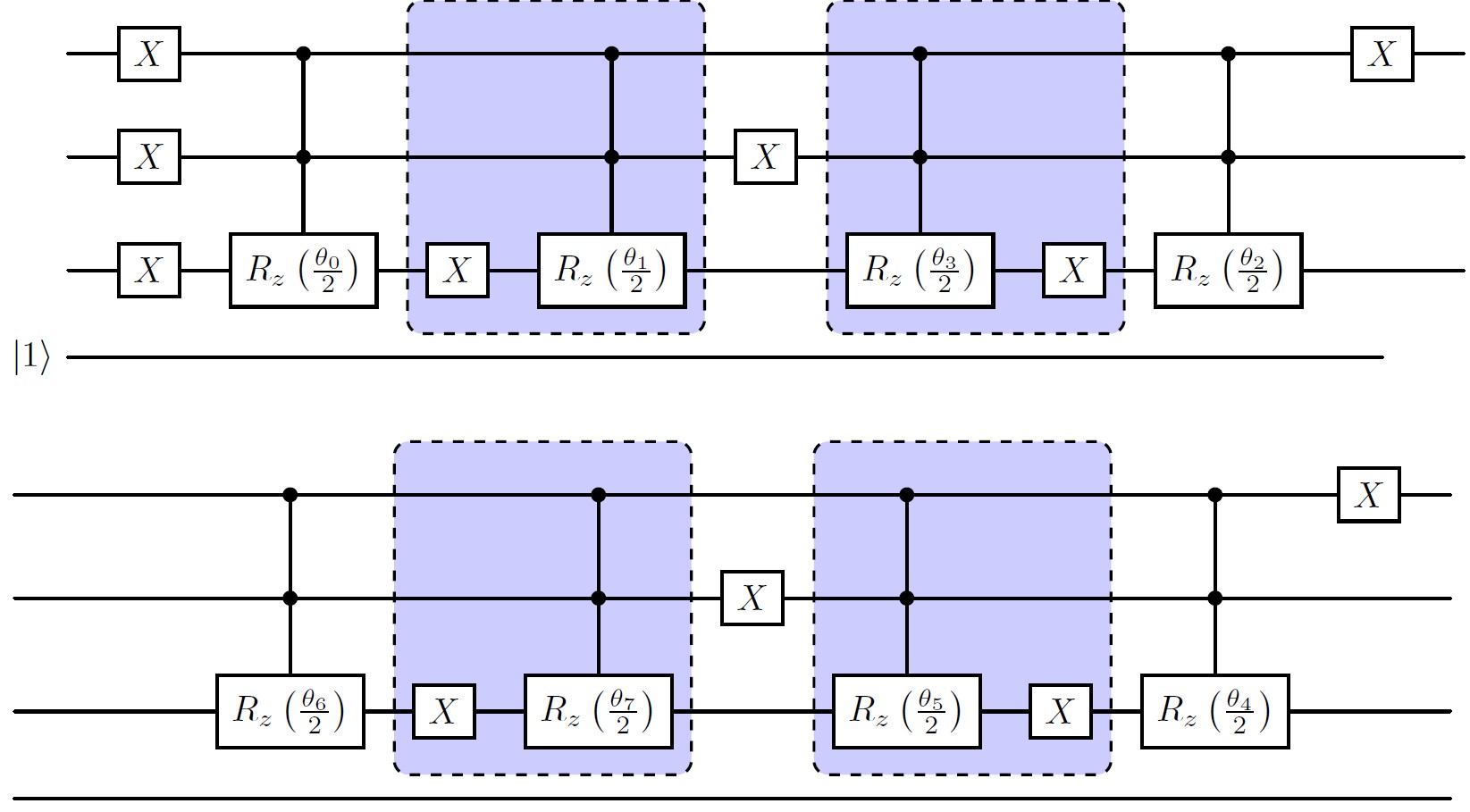}
	\caption{The circuit after applying  Lemma~\ref{lemma:lemma_2}.}
	\label{fig:simplified_circuit_with_rotations}
\end{figure}

\begin{figure}[h!]
	\centering
	\includegraphics[scale=0.25]{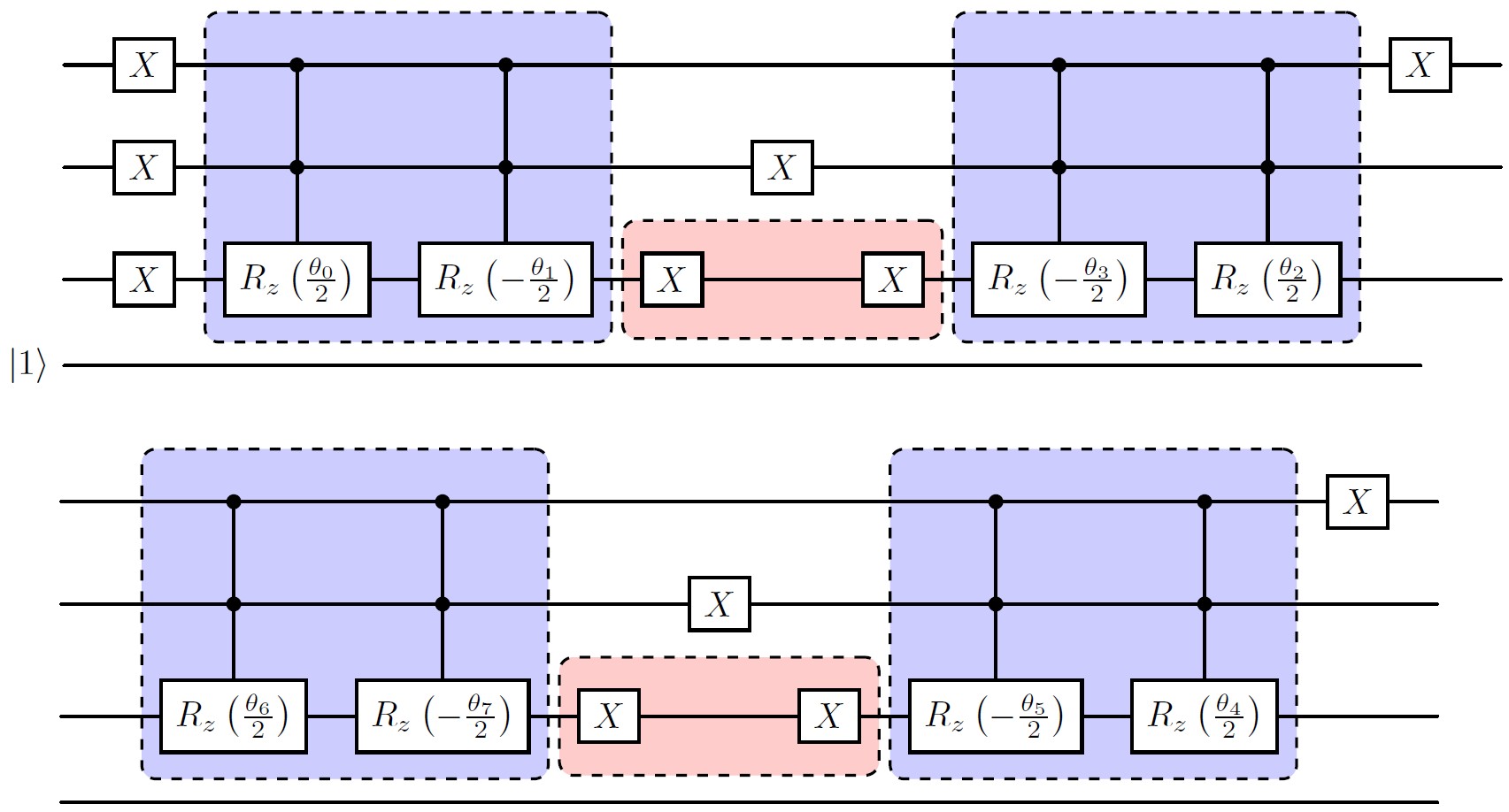}
	\caption{The circuit after applying Lemma~\ref{lemma:lemma_3}.}
	\label{fig:merging_multi_qubit_controlled_rotations}
\end{figure}

\begin{figure}[h!]
	\centering
	\includegraphics[scale=0.35]{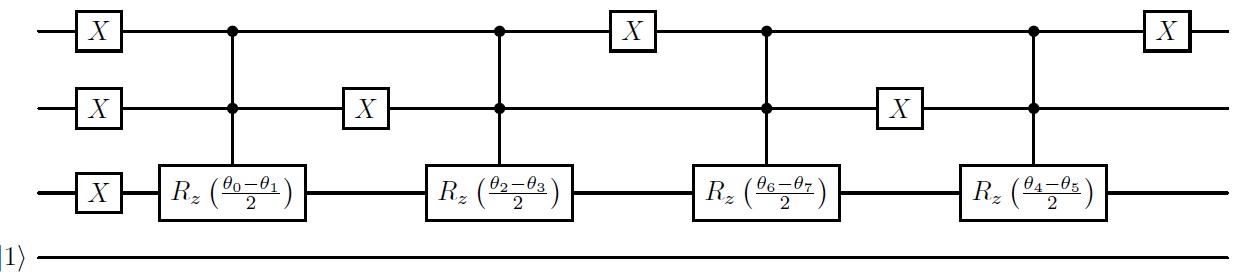}
	\caption{The decomposition for a \(UCR^{n-2}_{z}\) gate using Gray code with additional \(X \) gate.}
	\label{fig:circuit_ucr_n-2_using_gray}
\end{figure}

\begin{figure}[h!]
	\centering
	\includegraphics[scale=0.4]{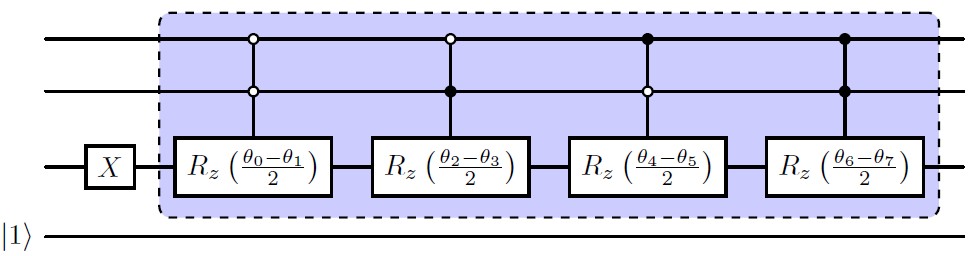}
	\caption{The decomposition for the \(UCR^{n-2}_{z}\) gate using the standard binary code with additional \(X \) gate.}
	\label{fig:circuit_ucr_n-2}
\end{figure}

\begin{figure}[h!]
	\centering
	\includegraphics[scale=0.4]{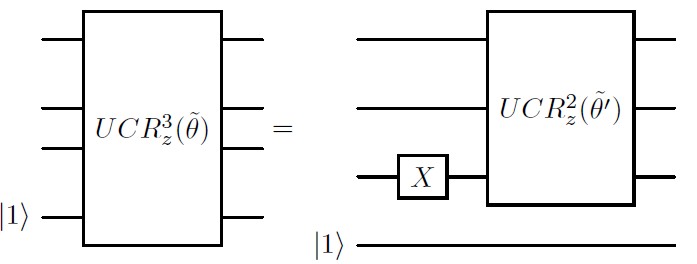}
	\caption{Transformation of a \(UCR^{3}_{z}\) gate into a \(UCR^{2}_{z}\) gate.}
	\label{fig:ucr_identity_small}
\end{figure}

In general case, we get the circuit equivalence shown in Figure~\ref{fig:ucr_identity}. Using the technique from Section~\ref{sec:efficient_decomposition_of_ucr}, we obtain the decomposition consisting of \(2^{n-2}\) \(CNOT\) gates, whereas the original circuit on the left in Figure~\ref{fig:ucr_identity} requires \(2^{n-1}\) \(CNOT\) gates to decompose it. Finally, we get the circuit shown in Figure~\ref{fig:final_result} with modified angles \(\theta'_{i} = (\theta_{2i} - \theta_{2i+1})/2\). Thus, the target qubit in the state \(\ket{1}\) is eliminated.

\begin{figure}[h!]
	\centering
	\includegraphics[scale=0.3]{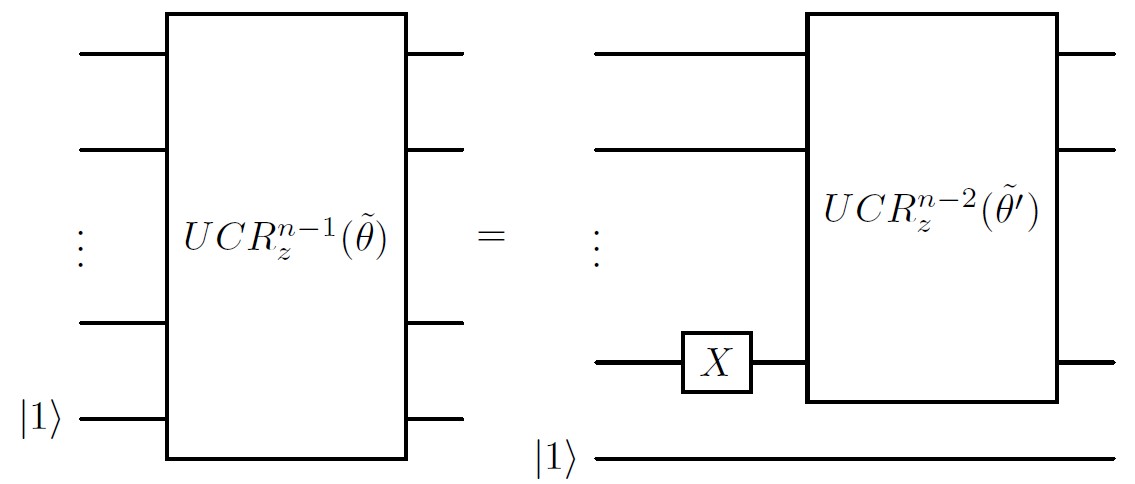}
	\caption{Transformation of a \(UCR^{n-1}_{z}\) gate into a \(UCR^{n-2}_{z}\) gate.}
	\label{fig:ucr_identity}
\end{figure}
\end{proof}

\section{Efficient algorithm for quantum hashing: circuit depth vs angle precision}\label{sec:result2}

In this section, we propose an algorithm that offers a trade-off between the number of \(CNOT\) gates (and, consequently, the circuit depth) and the precision of angles. This is particularly important in the context of NISQ devices, as hardware-imposed angle precision limit remains a critical aspect.

Let us recall the generalization formula (Equation~(\ref{eq:qh_generalization})) for quantum hashing:
$
\ket{\psi(x)} = \frac{1}{\sqrt{d}}\sum_{j=0}^{d-1}\ket{j}\left(R_{a}\left(\theta_{j}\right)\ket{q_{n}}\right),
$
where \(\theta_{j} = \frac{4\pi s_{j}x}{q}\). Let us also recall that to implement quantum hashing, the circuit shown in Figure~\ref{fig:general_algorithm_for_qh} is used.

\begin{figure}[h!]
	\centering
	\includegraphics[scale=0.3]{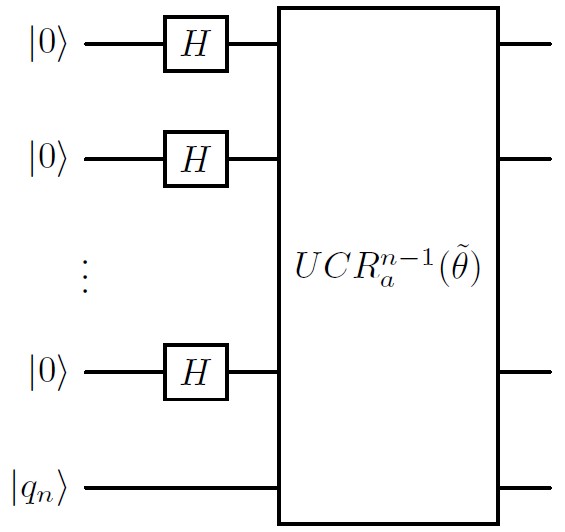}
	\caption{General algorithm for quantum hashing.}
	\label{fig:general_algorithm_for_qh}
\end{figure}

Our goal is to efficiently decompose the \(UCR^{n-1}_{a}\) gate. To achieve this, we employ the technique described in Section~\ref{sec:efficient_decomposition_of_ucr}. In this case, the resulting circuit consists of an alternating sequence of \(2^{n-1}\) one-qubit rotations \(R_{a}\) and \(2^{n-1}\) \(CNOT\) gates. In the resulting circuit, the original angles \(\tilde{\theta}\) are modified, and the relationship between the original angles \(\tilde{\theta}\) and the modified angles \(\tilde{\theta'}\) is given by Equation~(\ref{eq:connection_between_angles}): \(\tilde{\theta'} = \frac{1}{d}M^{T}\tilde{\theta}\).

The precision of the original angles is \(O\left(\frac{1}{q}\right)\). Using Equation~(\ref{eq:connection_between_angles}), we derive that the precision of the modified angles becomes \(O\left(\frac{1}{dq}\right) = O\left(\frac{1}{q\log q}\right)\). Consequently, the modified angles are more sensitive. Implementing a circuit with such parameters can be challenging for current NISQ devices \cite{koczor2024}.

\subsection{Multi-qubit controlled rotation decomposition}

It is known \cite[Lemma~4.1]{barenco1995} that any special unitary matrix \(U\) (i.e.  unitary matrix with unity determinant, briefly, \(U \in SU(2)\)) can be represented in the following way:

\begin{equation}\label{eq:su_matrix_general_decomposition}
	U = R_{z}(\alpha)R_{y}(\beta)R_{z}(\gamma),
\end{equation}
where $\alpha, \beta, \gamma$ are some angles.

Any \(U \in SU(2)\) can be also expressed \cite[Lemma~4.3]{barenco1995} as 

\begin{equation}\label{eq:su_matrix_decomposition}
	U = AXBXC,
\end{equation}
where \(A, B, C\in SU(2)\) and \(ABC = I\). In Equation~(\ref{eq:su_matrix_decomposition}), using Equation~(\ref{eq:su_matrix_general_decomposition}), we set \(A = R_{z}(\alpha)R_{y}(\beta/2)\), \(B = R_{y}(-\beta/2)R_{z}(-(\alpha+\gamma)/2)\) and \(C = R_{z}((\gamma-\alpha)/2)\).

Let us consider the decomposition for an \(n\)-qubit controlled \(U\) gate (see Figure~\ref{fig:decomposition_for_multi_qubit_controlled_gate}) from \cite[Lemma~7.9]{barenco1995} that is based on Equation~(\ref{eq:su_matrix_decomposition}). In Equation~(\ref{eq:su_matrix_decomposition}), set \(A = R_{z}(\theta/2)\), \(B =R_{z}(-\theta/2)\) and \(C = I\) for \(R_{z}(\theta)\) and \(A = R_{y}(\theta/2)\), \(B =R_{y}(-\theta/2)\) and \(C = I\) for \(R_{y}(\theta)\). So, we obtain the decomposition shown in Figure~\ref{fig:decomposition_for_multi-qubit_controlled_rotation}.

\begin{figure}[h!]
	\centering
	\includegraphics[scale=0.37]{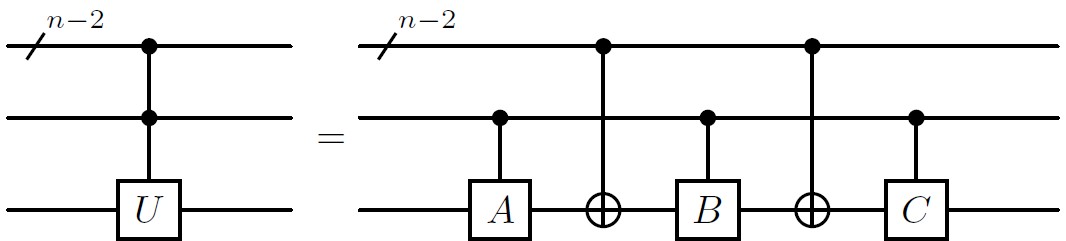}
	\caption{Decomposition for a \(C^{n-1}(U)\) gate.}
	\label{fig:decomposition_for_multi_qubit_controlled_gate}
\end{figure}

\begin{figure}[h!]
	\centering
	\includegraphics[scale=0.37]{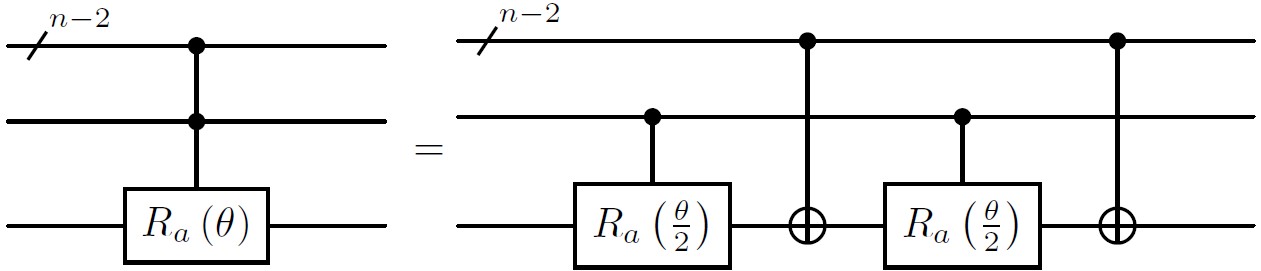}
	\caption{Decomposition for an \(n\)-qubit controlled rotation \(R_{a}\).}
	\label{fig:decomposition_for_multi-qubit_controlled_rotation}
\end{figure}

Note that the right circuit in Figure~{\ref{fig:decomposition_for_multi-qubit_controlled_rotation}} can be mirrored \cite{barenco1995}. It is worth mentioning that the qubit with index \(n-1\) can be used as an ancilla for decomposition of \(C^{n-2}(X)\) gates. It is known \cite[Corollary~7.4]{barenco1995} that \(C^{n}(X)\) gate can be decomposed into a circuit consisting of \(24n - 52\) \(CNOT\) gates.

A \(C^{1}(R_{a})\) gate is decomposed using the circuit presented in Figure~\ref{fig:decomposition_for_2_qubit_controlled_rotation}.

\begin{figure}[h!]
	\centering
	\includegraphics[scale=0.33]{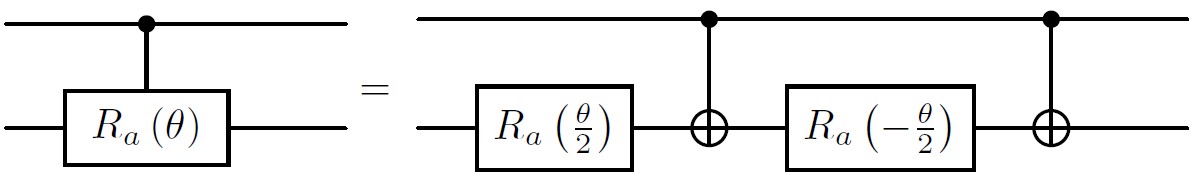}
	\caption{Decomposition for a \(R_{a}(\theta)\) gate.}
	\label{fig:decomposition_for_2_qubit_controlled_rotation}
\end{figure}


\subsection{Algorithm: circuit depth vs angle precision}

Let \textit{Decomposition~\(1\)} and \textit{Decomposition~\(2\)} be some procedures that will be described later. We propose the following algorithm:
    (i) Recursively apply \textit{Decomposition~\(1\)} \(k\) times to construct a circuit consisting of \(2^{k}\) \(UCR^{n-k-1}_{a}\) gates and \(2^{k}\) \(CNOT\) gates.
    (ii) Apply \textit{Decomposition~\(2\)} to each of the \(2^{k}\) \(UCR^{n-k-1}_{a}\) gates.

\textit{Decomposition~\(1\)} involves applying the decomposition step shown in Figure~\ref{fig:ucr_decomposition_step} to each \(UCR\) gate.

Below, we describe \textit{Decomposition~\(2\)} procedure.
First, we reorder multi-qubit controlled rotations in a \(UCR^{n-k-1}_{a}\) gate using Gray code. Next, we highlight \(2^{n-k-3}\) circuit segments shown in Figure~\ref{fig:circuit_segment}.

\begin{figure}[h!]
	\centering
	\includegraphics[scale=0.3]{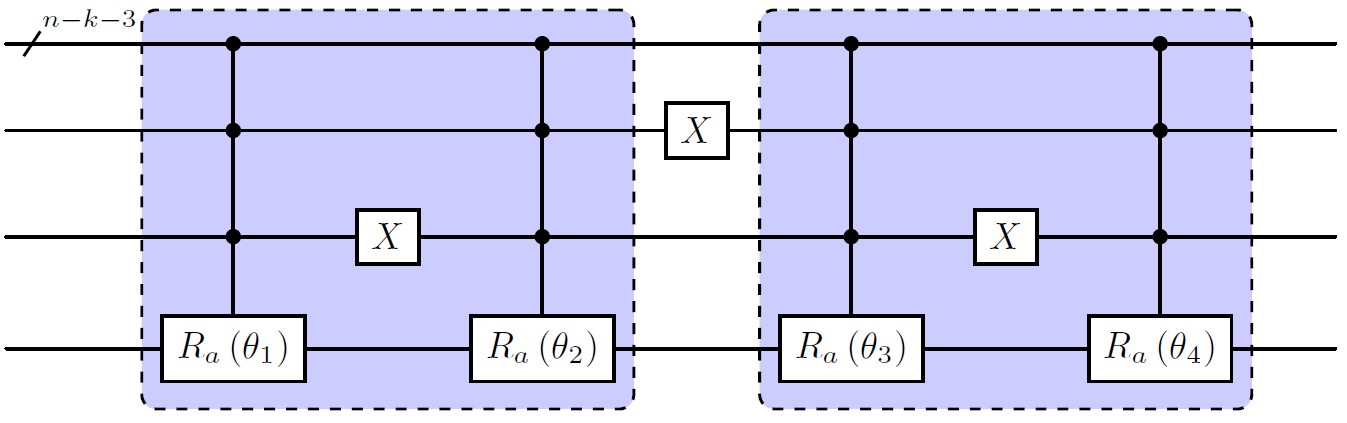}
	\caption{The circuit segment.}
	\label{fig:circuit_segment}
\end{figure}

Let us propose a decomposition for the border-boxed circuit segments in Figure~\ref{fig:circuit_segment}. Thus, we focus on decomposing the smaller segment presented in Figure~\ref{fig:pair_of_multi-qubit_controlled_rotations}.

\begin{figure}[h!]
	\centering
	\includegraphics[scale=0.3]{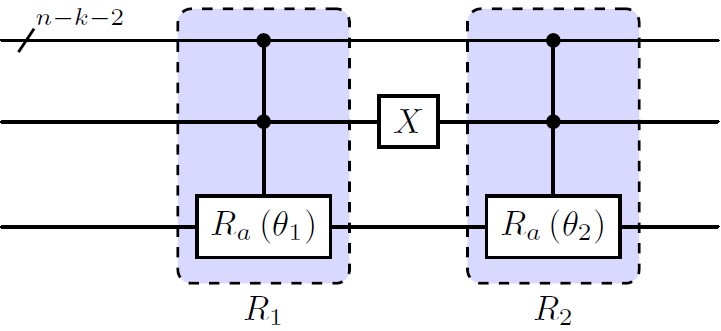}
	\caption{A pair of multi-qubit controlled rotations.}
	\label{fig:pair_of_multi-qubit_controlled_rotations}
\end{figure}

Recall the decomposition that is presented in Figure~\ref{fig:decomposition_for_multi-qubit_controlled_rotation}. We apply this decomposition to the \(R_{1}\) gate and its mirrored version to the \(R_{2}\) gate. As a result, it produces the circuit depicted in Figure~\ref{fig:decomposition_for_multi-qubit_controlled_rotations}. It can be seen that border-boxed \(C^{n-k-2}(X)\) gates in Figure~\ref{fig:decomposition_for_multi-qubit_controlled_rotations} cancel out each other, and we obtain the circuit shown in Figure~\ref{fig:simplified_decomposition_of_multi-qubit_controlled_rotations}. We decompose the blue border-boxed fragment in Figure~\ref{fig:simplified_decomposition_of_multi-qubit_controlled_rotations} using the circuit shown in Figure~\ref{fig:decomposition_for_the_segment}. Next, we apply the decomposition shown in Figure~\ref{fig:simplified_decomposition_of_multi-qubit_controlled_rotations} to border-boxed circuit segments depicted in Figure~\ref{fig:circuit_segment}.  See Figure~\ref{fig:bigger_circuit_segment} and note that border-boxed controlled rotations are merged into one two-qubit controlled rotation. This trick is done for all pairs of `outside' two-qubit controlled rotations.

\begin{figure}[h!]
	\centering
	\includegraphics[scale=0.28]{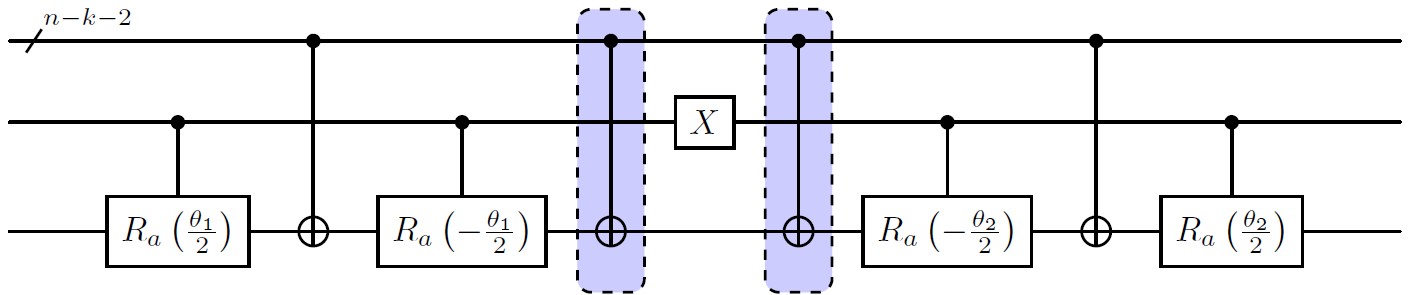}
	\caption{Decomposition for a pair of multi-qubit controlled rotations.}
	\label{fig:decomposition_for_multi-qubit_controlled_rotations}
\end{figure}

\begin{figure}[h!]
	\centering
	\includegraphics[scale=0.3]{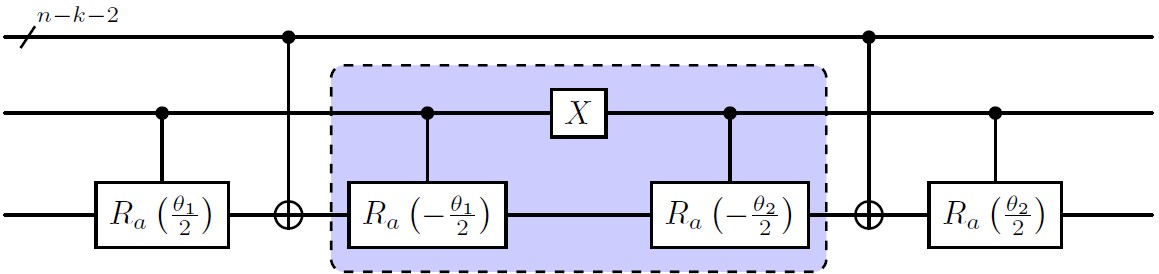}
	\caption{Simplified decomposition for a pair of multi-qubit controlled rotations.}
	\label{fig:simplified_decomposition_of_multi-qubit_controlled_rotations}
\end{figure}

\begin{figure}[h!]
	\centering
	\includegraphics[scale=0.3]{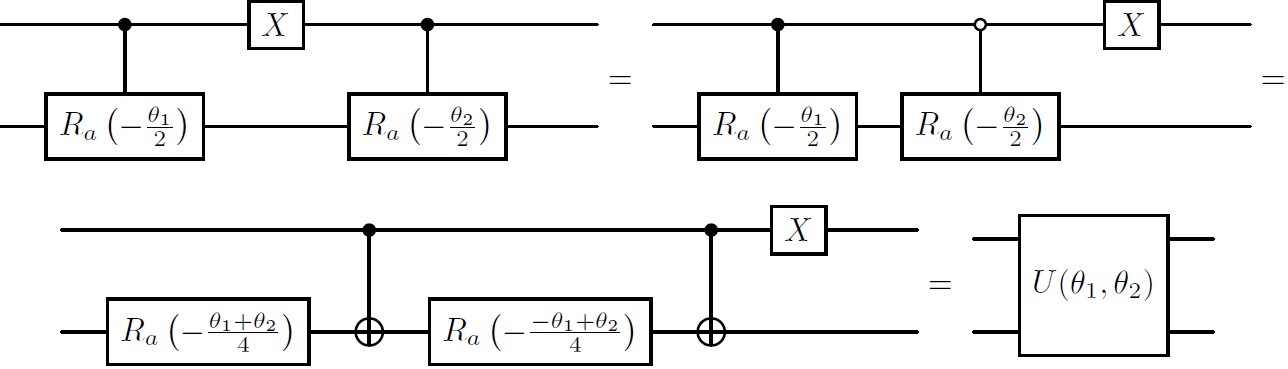}
	\caption{Decomposition for the circuit fragment.}
	\label{fig:decomposition_for_the_segment}
\end{figure}

\begin{figure}[h!]
	\centering
	\includegraphics[scale=0.3]{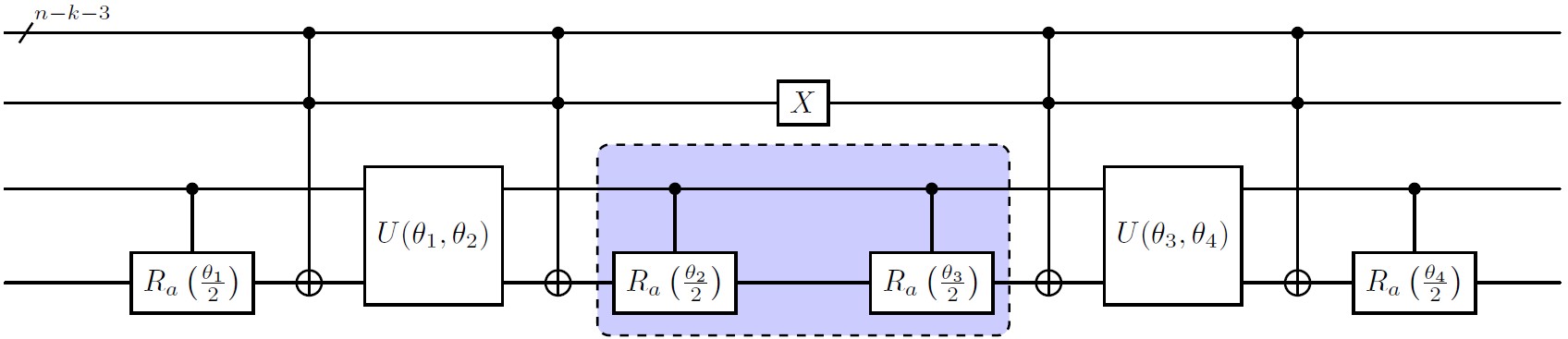}
	\caption{Intermediate decomposition for the circuit segment.}
	\label{fig:bigger_circuit_segment}
\end{figure}

Now, let us count the total number of \(CNOT\) gates in the resulting circuit. The decomposition of \(C^{n-k-2}(X)\) \cite{barenco1995} requires
$
24\cdot(n-k-2)-52 = 24\cdot(n-k)-100
$
\(CNOT\) gates. Note that each of the \(2^{n-k-2}\) circuit segments includes two \(C^{n-k-2}(X)\) gates, one \(U_{a}(\theta_{1}, \theta_{2})\) gate, and one \(C^{1}(R_{a})\) gate. In addition, the first \(C^{1}(R_{a})\) gate must be taken into account. Thus, the decomposition of each \(UCR^{n-k-1}_{a}\) gate requires
$2^{n-k-2} \cdot (2 \cdot (24 \cdot (n-k)-100)+2+2) + 2=2^{n-k} \cdot (12 \cdot (n-k)-49)+2$
\(CNOT\) gates.

So, the total number of \(CNOT\) gates after \(k\) iterations is the following:
$
3 \cdot 2^{k} + 2^{n} \cdot (12 \cdot (n-k)-49)
$
or
$
3 \cdot 2^{k} + d \cdot \left(24 \cdot (\log d-k)-74\right),
$
where \(k \leq n - 5 = \log d - 4\).

\begin{theorem}
	Application of the proposed decomposition algorithm to the original circuit that implements quantum hashing yields a circuit with \(3 \cdot 2^{k} + 2^{n} \cdot (12 \cdot (n-k)-49)\) or \(3 \cdot 2^{k} + d \cdot \left(24 \cdot (\log d-k)-74\right)\) \(CNOT\) gates, for \(k \leq n - 5 = \log d - 4\), where \(k\) is the number of applications of \textit{Decomposition~\(1\)}.
\end{theorem}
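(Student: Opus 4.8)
The plan is to prove the statement as a bookkeeping argument: since \textit{Decomposition~\(1\)} and \textit{Decomposition~\(2\)} and all of their constituent circuit identities (Figures~\ref{fig:ucr_decomposition_step}, \ref{fig:decomposition_for_multi-qubit_controlled_rotation}, \ref{fig:circuit_segment}--\ref{fig:decomposition_for_the_segment}) have already been established, nothing new about circuit equivalence must be shown. What remains is to tally the \(CNOT\) gates contributed at each stage of the two-step algorithm, sum them, and verify that the two stated closed forms coincide under \(d = 2^{n-1}\). I would therefore organize the proof as (a) a count for stage (i), (b) a per-block count for stage (ii), (c) the combination and the algebraic identity between the two formulas, and (d) the determination of the admissible range of \(k\).

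First I would account for stage (i). By the construction of \textit{Decomposition~\(1\)}, applying the decomposition step of Figure~\ref{fig:ucr_decomposition_step} \(k\) times to the single \(UCR^{n-1}_{a}\) gate leaves \(2^{k}\) independent \(UCR^{n-k-1}_{a}\) blocks interleaved with exactly \(2^{k}\) \(CNOT\) gates; the point to make explicit is that the Gray-code ordering causes adjacent \(CNOT\) gates at the block boundaries to cancel, so that this count stays at \(2^{k}\) rather than growing. These \(2^{k}\) gates survive unchanged into the final circuit. Next I would count the \(CNOT\) gates produced by one application of \textit{Decomposition~\(2\)} to a single \(UCR^{n-k-1}_{a}\) block. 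After reordering its \(2^{n-k-1}\) multi-qubit controlled rotations by Gray code and pairing them, the block splits into \(2^{n-k-2}\) pair-segments; by the decomposition of Figure~\ref{fig:decomposition_for_multi-qubit_controlled_rotation} and the cancellation of the boundary \(C^{n-k-2}(X)\) gates (Figure~\ref{fig:simplified_decomposition_of_multi-qubit_controlled_rotations}), each pair-segment reduces to two \(C^{n-k-2}(X)\) gates, one merged two-qubit gate \(U_{a}(\theta_{1},\theta_{2})\), and one \(C^{1}(R_{a})\) gate, with one extra leading \(C^{1}(R_{a})\) gate for the whole block. Using the linear \(C^{m}(X)\) decomposition of \cite{barenco1995}, each \(C^{n-k-2}(X)\) costs \(24(n-k-2)-52 = 24(n-k)-100\) gates, while \(U_{a}(\theta_{1},\theta_{2})\) and \(C^{1}(R_{a})\) cost \(2\) each, giving \(2^{n-k-2}\bigl(2(24(n-k)-100)+2+2\bigr)+2 = 2^{n-k}(12(n-k)-49)+2\) gates per block.

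Combining the stages then gives the total \(2^{k} + 2^{k}\bigl(2^{n-k}(12(n-k)-49)+2\bigr) = 3\cdot 2^{k} + 2^{n}(12(n-k)-49)\); substituting \(2^{n} = 2d\) and \(n = \log d + 1\) converts this directly into \(3\cdot 2^{k} + d\bigl(24(\log d - k)-74\bigr)\), which establishes the equivalence of the two expressions. The range of \(k\) is read off from the only applicability constraint in the count, namely that the \(24m-52\) decomposition of \cite{barenco1995} is valid for \(C^{n-k-2}(X)\), which requires \(n-k-2 \ge 3\), i.e. \(k \le n-5 = \log d - 4\).

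The main obstacle is not the arithmetic but the two structural claims that feed it: that stage (i) leaves exactly \(2^{k}\) surviving \(CNOT\) gates, and that each pair-segment collapses to precisely two \(C^{n-k-2}(X)\) gates plus the two cheap rotations once the boundary \(C^{n-k-2}(X)\) gates annihilate. Both are the cancellations already drawn in the preceding figures, so the cleanest proof cites the specific figure that certifies each cancellation rather than re-deriving the circuit identities, and reserves the algebra for the simplification \(2^{n-k-2}(48(n-k)-196) = 2^{n-k}(12(n-k)-49)\) and the change of variables \(d = 2^{n-1}\).
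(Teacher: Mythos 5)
Your proposal is correct and takes essentially the same route as the paper: the paper's proof is exactly this tally, namely \(2^{k}\) \(CNOT\) gates surviving from \textit{Decomposition~1}, then \(2^{n-k-2}\bigl(2(24(n-k)-100)+2+2\bigr)+2 = 2^{n-k}(12(n-k)-49)+2\) gates per \(UCR^{n-k-1}_{a}\) block (two \(C^{n-k-2}(X)\) gates, one \(U_{a}(\theta_{1},\theta_{2})\), one \(C^{1}(R_{a})\) per segment, plus the leading \(C^{1}(R_{a})\)), summed over \(2^{k}\) blocks and rewritten via \(d = 2^{n-1}\). Your derivation of the range \(k \le n-5 = \log d - 4\) from the validity of the \(24m-52\) decomposition of \(C^{m}(X)\) and your explicit note on the boundary-\(CNOT\) cancellation in stage (i) are consistent with (indeed slightly more careful than) what the paper asserts.
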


We can see that as \(k\) increases, the circuit depth decreases from \(O(\log q \log\log q)\) to \(O(\log q)\), while the angle precision changes from \(O\left(1/q\right)\) to \(O\left(1/(q\log q)\right)\).


Authors of \cite{khadieva2024} propose an algorithm  that balances between the number of \(CNOT\) gates and the precision of rotation angles that yields a circuit with \(2^{k}+d(96(\log d-k)-384))\) \(CNOT\) gates for \(k \leq \log d - 5\), matching the asymptotic complexity of our algorithm. Thus, our approach reduces the number of \(CNOT\) gates by
$
d(72(\log d-k)-310) - 2^{k+1} \geq \frac{799}{16}d
$ for \(k \leq \log d - 5\),
achieving asymptotic savings of \(O(d)\) \(CNOT\) gates.

\subsection{Circuit optimization for the phase form of quantum hashing}

Note that the circuit depicted in Figure~\ref{fig:simplified_decomposition_of_multi-qubit_controlled_rotations} can be further simplified if we focus on the phase form of quantum hashing. A controlled rotation about the \(z\)-axis can be replaced by a single rotation about the \(z\)-axis applied to the control qubit provided that the target qubit is fixed in one of the basis states. This is illustrated in Figure~\ref{fig:circuit_equivalences} for the case, when the target qubit is in the state \(\ket{1}\).

\begin{figure}[h!]
	\centering
	\includegraphics[scale=0.65]{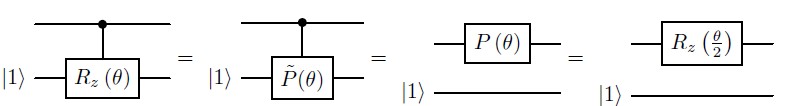}
	\caption{Circuit equivalences.}
	\label{fig:circuit_equivalences}
\end{figure}

Thus, the `outside' controlled rotations in Figure~\ref{fig:simplified_decomposition_of_multi-qubit_controlled_rotations} are replaced with single rotations, as described above, leading us to the circuit shown in Figure~\ref{fig:improvement_phase_form}.

\begin{figure}[h!]
	\centering
	\includegraphics[scale=0.65]{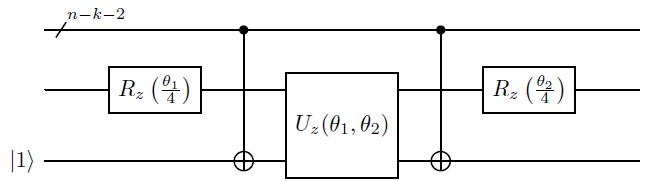}
	\caption{Simplified decomposition for a pair of multi-qubit controlled rotations.}
	\label{fig:improvement_phase_form}
\end{figure}

Then, the circuit depicted in Figure~\ref{fig:bigger_circuit_segment} is further simplified, as shown in Figure~\ref{fig:bigger_circuit_segment_for_phase}.

\begin{figure}[h!]
	\centering
	\includegraphics[scale=0.55]{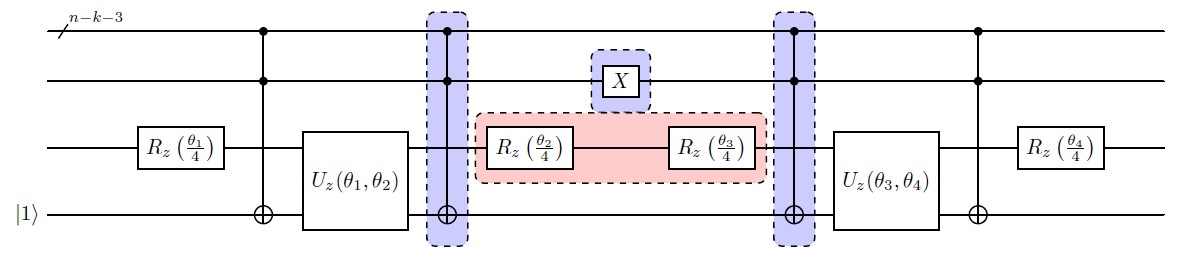}
	\caption{Intermediate decomposition for the segment of the circuit implementing the phase form of quantum hashing.}
	\label{fig:bigger_circuit_segment_for_phase}
\end{figure}

Using the circuit identity shown in Figure~\ref{fig:pair_of_multi-qubit_controlled_x_gates} for blue border-boxed gates we obtain the circuit depicted in Figure~\ref{fig:bigger_circuit_segment_for_phase2}.

\begin{figure}[h!]
	\centering
	\includegraphics[scale=0.6]{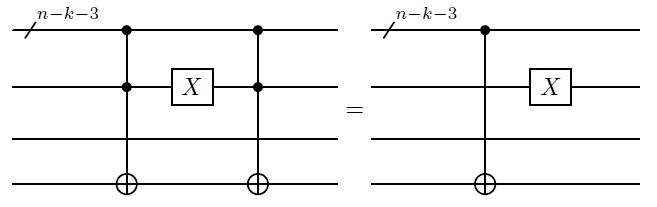}
	\caption{A pair of multi-qubit controlled \(X\) gates.}
	\label{fig:pair_of_multi-qubit_controlled_x_gates}
\end{figure}

\begin{figure}[h!]
	\centering
	\includegraphics[scale=0.6]{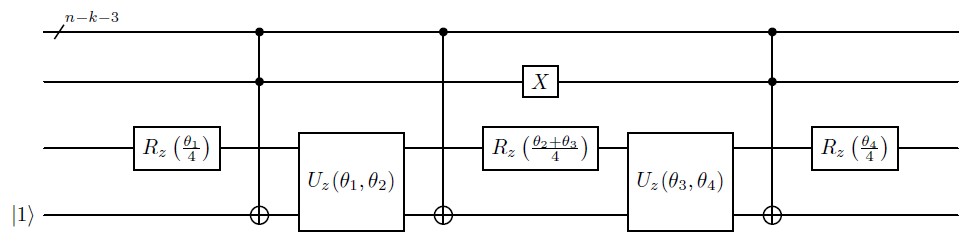}
	\caption{Intermediate decomposition for the segment of the circuit implementing the phase form of quantum hashing.}
	\label{fig:bigger_circuit_segment_for_phase2}
\end{figure}

Each of the \(2^{n-k-2}\) circuit segments includes one \(C^{n-k-3}(X)\) gate, except the last segment, and one \(U_{z}(\theta_{1}, \theta_{2})\) gate. It is necessary to mention that the last segment consists of a \(C^{n-k-2}(X)\) gate. We also need to account for the first \(C^{n-k-2}(X)\) gate. Thus, the decomposition of each \(UCR^{n-k-1}_{z}\) gate requires

\[
\begin{split}
	&2 \cdot (24 \cdot (n-k)-100) + (2^{n-k-2}-1) \cdot (24 \cdot (n-k-3)-52) + 2^{n-k-2} \cdot 2=\\
	&2^{n-k-1} \cdot (12 \cdot (n-k)-61) + 24 \cdot (n-k) - 76
\end{split}
\]
\(CNOT\) gates.

Finally, the total number of \(CNOT\) gates after \(k\) iterations is the following:
\[
2^{k} \cdot (24 \cdot (n-k) - 75) + 2^{n-1} \cdot (12 \cdot (n-k)-61)
\]
or
\[
2^{k} \cdot (24 \cdot (\log d-k) - 51) + d \cdot (12 \cdot (\log d-k)-49),
\]
where \(k \leq n - 6 = \log d - 5\).

\section{Conclusion}\label{sec:conclusions}

Implementation of algorithms on current quantum computers faces significant challenges. Therefore, it is essential to design efficient algorithms optimized with respect to multiple metrics including the circuit depth, the number of two-qubit gates, the precision of rotation angles.

We propose algorithms for quantum hashing that are inspired by NISQ quantum devices. First, we demonstrated how to eliminate an ancilla qubit used in constructing the phase form quantum hash. Our technique reduces the number of qubits used by one and reduces the number of \(CNOT\) gates and the circuit depth by half, thereby saving both time and space resources. Our technique also outperforms all existing methods. Second, we present an algorithm that provides a trade-off between the number of \(CNOT\) gates (the circuit depth) and the rotation angle precision. Since there are limitations on the precision of rotation angles, our algorithm allows the circuit to avoid impractically small rotations. Although asymptotically our algorithm is equivalent to the one from \cite{khadieva2024}, it is more efficient in terms of the number of \(CNOT\) gates.


\textbf{\ackname} The research has been supported by Russian Science Foundation Grant 24-21-00406, \url{https://rscf.ru/project/24-21-00406/}
%
%
%
\bibliographystyle{splncs04}
\bibliography{generic}
\newpage
\appendix

\end{document}